\begin{document}
\title{\Rtrace: A Semantics for Understanding React Hooks}
\subtitle{An Operational Semantics and a Visualizer for Clarifying React Hooks}

\author{Jay Lee}
\orcid{0000-0002-2224-4861}
\email{jhlee@ropas.snu.ac.kr}

\author{Joongwon Ahn}
\orcid{0009-0007-0055-7297}
\email{jwahn@ropas.snu.ac.kr}

\author{Kwangkeun Yi}
\orcid{0009-0007-5027-2177}
\email{kwang@ropas.snu.ac.kr}

\affiliation{%
  \department{Department of Computer Science and Engineering}
  \institution{Seoul National University}
  \city{Seoul}
  \country{Korea}}

\keywords{React, Hooks, Render semantics, Functional reactive programming}

\begin{CCSXML}
<ccs2012>
   <concept>
       <concept_id>10003752.10010124.10010131.10010134</concept_id>
       <concept_desc>Theory of computation~Operational semantics</concept_desc>
       <concept_significance>500</concept_significance>
       </concept>
   <concept>
       <concept_id>10011007.10011006.10011008.10011009.10011012</concept_id>
       <concept_desc>Software and its engineering~Functional languages</concept_desc>
       <concept_significance>100</concept_significance>
       </concept>
   <concept>
       <concept_id>10011007.10011006.10011041.10010943</concept_id>
       <concept_desc>Software and its engineering~Interpreters</concept_desc>
       <concept_significance>300</concept_significance>
       </concept>
   <concept>
       <concept_id>10011007.10011006.10011050.10011052</concept_id>
       <concept_desc>Software and its engineering~Graphical user interface languages</concept_desc>
       <concept_significance>500</concept_significance>
       </concept>
   <concept>
       <concept_id>10003120.10003145.10003151</concept_id>
       <concept_desc>Human-centered computing~Visualization systems and tools</concept_desc>
       <concept_significance>300</concept_significance>
       </concept>
   <concept>
       <concept_id>10011007.10011006.10011039</concept_id>
       <concept_desc>Software and its engineering~Formal language definitions</concept_desc>
       <concept_significance>500</concept_significance>
       </concept>
 </ccs2012>
\end{CCSXML}

\ccsdesc[500]{Theory of computation~Operational semantics}
\ccsdesc[100]{Software and its engineering~Functional languages}
\ccsdesc[300]{Software and its engineering~Interpreters}
\ccsdesc[500]{Software and its engineering~Graphical user interface languages}
\ccsdesc[300]{Human-centered computing~Visualization systems and tools}
\ccsdesc[500]{Software and its engineering~Formal language definitions}

\citestyle{acmauthoryear}

\begin{abstract}
React has become the most widely used web front-end framework, enabling the creation of user interfaces in a declarative and compositional manner.
Hooks are a set of APIs that manage side effects in function components in React.
However, their semantics are often seen as opaque to developers, leading to UI bugs.
We introduce \Rtrace, a formalization of the semantics of the essence of React Hooks, providing a semantics that clarifies their behavior.
We demonstrate that our model captures the behavior of React, by theoretically showing that it embodies essential properties of Hooks and empirically comparing our \Rtrace-definitional interpreter against a test suite.
Furthermore, we showcase a practical visualization tool based on the formalization to demonstrate how developers can better understand the semantics of Hooks.
\end{abstract}

\maketitle

\section{Introduction}\label{sec:intro}
\subsection{Overview of React}
React (\href{https://react.dev/}{react.dev}) is the most widely used framework for developing web-based graphical user interfaces (GUI) today \citep{Sta24Stack}.
React developers structure the UI as a tree of \emph{components,} where each component serves as a \emph{functional} specification that declaratively defines a portion of the UI and its behavior.
To declare user interactions in a component, developers use \emph{Hooks}, a collection of library functions that ``hooks into'' the runtime to manage state and handle side effects within components.

A React component is a function that serves as a declarative specification of a UI, which is read by the React runtime to render the interface.
The process of ``reading'' this specification is nothing special---the runtime directly invokes the component, which is a plain function.
When these components are called with appropriate arguments, they return a data structure representing how the component should appear on the screen.
Ideally, when a component is called with the same arguments, it should render the same interface \citep{Cac16React}.

This functional approach differs significantly from traditional object-oriented UI frameworks.
In object-oriented frameworks like Flutter (\href{https://flutter.dev/}{flutter.dev}), Qt (\href{https://www.qt.io/}{qt.io}), UIKit (\href{https://developer.apple.com/documentation/uikit/}{developer.apple.com}), or even legacy versions of React \citep{Met25Component,MadLhoTip20Semantics}, components are modeled with classes, where each view directly corresponds to an instance of a class.
Developers define different methods to be called at different lifecycle phases---such as initialization, state update, etc.---creating a clear separation.

React, however, introduces a different model: a single function component plays multiple roles---the same component is called for both creating a new view and updating an existing view.
To enable this, React switches between different implementations of the same Hook at runtime depending on the rendering phase.

\begin{wrapfigure}[4]{l}{0.387\linewidth}
  \vspace*{-1.2\intextsep}
  \begin{reactcode}
function Counter({ x }) {
  const [s, setS] = useState(() => x);
  const h = () => setS((s) => s+1);
  return <button onClick={h}> {s}
    </button>; }
  \end{reactcode}
\end{wrapfigure}
\noindent The \reacttt{Counter} component on the left demonstrates the \reacttt{useState} Hook for managing state.
This simple component increments its state~\reacttt{s} each time the button is clicked.
When first rendered, \reacttt{useState} returns an initial value from the initializer function~\reacttt{() => x}.
Here,~\reacttt{s} captures the current state, and \reacttt{setS}~accepts an updater to set a new state.
In subsequent renders, \reacttt{useState} automagically ``remembers'' the previous state and returns it instead.
\reacttt{button} is correctly rendered with new~\reacttt{s}, by efficiently diffing the returned \reacttt{button}---this is called \emph{reconciliation} \citep{Met25Reconciliation}.
This example illustrates how even a simple component like \reacttt{Counter} has different semantics across lifecycle phases.

Another peculiarity is that developers \emph{cannot} use JavaScript (JS) variables and direct assignments to manage a component's state.
In the above example, assigning a new value to~\reacttt{s} does not update the state of~\reacttt{Counter}---\reacttt{setS} must be used to update the state.

\begin{wrapfigure}[8]{r}{0.44\linewidth}
  \vspace*{-.8\intextsep}
  \begin{reactcode}
function Cond() {
  const [b, setB] = useState(() => false);
  if (b) {
    const [s, setS] = useState(() => 0);
    return s;
  } else {
    const h = () => setB(b => !b);
    return <button onClick={h}> Show
      </button>; } }
  \end{reactcode}
\end{wrapfigure}
Moreover, developers must adhere to specific rules imposed by React, collectively known as the \emph{Rules of React} \citep{Met25Rules}.
These rules represent invariants that components must maintain for React to properly manage the UI.
The \reacttt{Cond} component on the right violates one of these rules on line~4 by calling \reacttt{useState} conditionally inside the \reacttt{if}~branch.
React requires that all Hooks be called unconditionally at the top-level of a component, ensuring they execute in the same order on every render.
Initially, \reacttt{Cond} will show a button to toggle the Boolean state~\reacttt{b} from \reacttt{false} to \reacttt{true}.
When you click ``Show'' (setting~\reacttt{b} to \reacttt{true}), instead of displaying the initial state of~\reacttt{s} which is~|0|, the UI breaks down at runtime.
This stems from React's implementation detail where each Hook call is identified by its position in a linked list of internal bookkeeping objects.

\subsection{The Problem}
Hooks have subtle semantics, making it challenging to avoid unexpected behaviors.
The rendering process is opaque---it is unclear \emph{how} React registers updates or \emph{when} re-renders occur.
Therefore, React developers must follow specific rules or risk breaking their UI (detailed in~\cref{sec:pit}).
Worse yet, the host language is oblivious to React's rules, so it's up to programmers to adhere to them.
Without a clear understanding of the underlying mechanisms, developers must rely on informal resources that typically provide high-level guides rather than explanations of the render semantics, leaving even experienced developers vulnerable to misconceptions about how Hooks actually work.

\subsection{Our Solution}
We present \Rtrace, an operational semantics of React Hooks:
\begin{itemize}
  \item \textbf{\Rtrace{} is a semantics for clarifying the behavior of React Hooks.}
    \Rtrace{} is \emph{high-level enough} to abstract away implementation details and \emph{precise enough} to aid developers in reasoning about rendering behavior~(\cref{subsec:example}).
    Our semantics helps developers clearly understand confusing UI bugs like infinite re-rendering~(\cref{sec:pit}).

  \item \textbf{\Rtrace{} is a foundation for building semantic-based tools.}
    \Rtrace{} provides a semantics for React Hooks~(\cref{sec:sem}), which is the first step in designing any semantic-based tool, such as an abstract interpreter \citep{CouCou77Abstract,RivYi20Introduction}.
    As an example for a semantic-based tool based on \Rtrace, we present an interactive tool that explains and visualizes the behavior of React programs~(\cref{sec:impl}).

  \item \textbf{\Rtrace{} captures the essence of React.}
    We prove that \Rtrace{} conforms to the key properties of React according to the documentation, as well as presenting an empirical conformance test suite~(\cref{sec:rules}).
\end{itemize}

\paragraph{Organization}
We introduce \reacttt{useState} and \reacttt{useEffect} and describe challenges in understanding them in~\cref{sec:over}, and illustrate common pitfalls from their interactions in~\cref{sec:pit}.
We formalize React Hook semantics in~\cref{sec:sem}, present a \Rtrace-definitional interpreter and visualizer in~\cref{sec:impl}, and demonstrate conformance with React in~\cref{sec:rules}.
We discuss extensions and implications in~\cref{sec:discussion}, compare with related work in~\cref{sec:related}, and conclude with future work in~\cref{sec:conclusion}.

\section{Peculiarities of React Hooks}\label{sec:over}
In this section, we examine the peculiarities of the \reacttt{useState}~(\cref{subsec:useState}) and \reacttt{useEffect}~(\cref{subsec:useEffect}) Hooks.

We introduce a simple applicative language for \Rtrace{} that captures the essence of React Hooks and use it throughout the paper to illustrate the examples and the semantics.
Thus we decouple ourselves from the subtleties of the complex semantics of JS \citep{RyuPar24JavaScript}, which is beyond the scope of our research.

\paragraph{Syntax}
The syntax of \Rtrace{} is given in \cref{fig:syntax}.
A React program~$P$ is a sequence of \emph{component definitions}~$D$, followed by the \emph{main expression}~$e$.
A component definition~$D = \cfunc{C}{x}{e}$ defines a component named~$C$ that accepts parameter~$x$ bound in body~$e$.
A component name~$C$ is capitalized to distinguish it from other identifiers.
Expressions~$e$ are mostly standard, except for the array view~$[\Overline{e}]$ for nested views and the Hooks.
The \reacttt{useState} and \reacttt{useEffect} Hooks can be used only at the top level of the component bodies; they cannot be used as a sub-expression of other expressions---this is enforced by the Rules of React \citep{Met25Hooks}.
We check this syntactically during parsing in the \Rtrace{} interpreter~(\cref{sec:impl}).
Each \reacttt{useState} Hook is labeled uniquely with a natural number~$\ell$.
A view represents the structure of the UI that evaluates into either a constant, a closure, or an array view.
An array view can represent a nested structure, modeling the JSX syntax \citep{Met22JSX}.
A closure used as a view represents an event handler, e.g., a button or a text input.
Metavariable~$\oplus$ stands for total binary operations over integers.
$\print{e}$ prints to the console.
We only consider programs~$P$ whose main expressions evaluate to a view.

\begin{figure}[t]
  \centering
  \begin{minipage}{\textwidth}
    \begin{flalign*}
      x, C &\in \dom{Var} & n &\in \mathbb{Z} & \ell &\in \mathbb{N} &
      \dom{Prog}\ P &\Coloneq \Overline{D}\ e & \dom{ComDef}\ D &\Coloneq \cfunc{C}{x}{e}
    \end{flalign*}
  \end{minipage}
  \begin{bnf}[rccll]
    e : \dom{Exp} ::=
    | () // true // false // $n$ // $x$ // $C$ // $e \oplus e$ // $[\Overline{e}]$ // $\print{e}$ // $\cond{e}{e}{e}$ // $\seq{e}{e}$
    | $\func{x}{e}$ // $\app{e}{e}$ // $\letbind{x}{e}{e}$ // $\stbind{x}{e}{e}$ // $\eff{e}$
  \end{bnf}
  \caption{Syntax of \Rtrace.}\label{fig:syntax}
\end{figure}

We model only the \reacttt{useState} and \reacttt{useEffect} Hooks in our language, as they are the most essential Hooks that are closely coupled with the rendering of components.
\begin{itemize}
  \item The \reacttt{useState} Hook provides a way to manage state in components, making it a cornerstone for defining user interactions in components.
    \reacttt{useState} returns a pair of values: the current state~$x$ and a setter function~$x_{\textsf{set}}$ that updates the state.
  \item The \reacttt{useEffect} Hook enables running arbitrary code~$e$ after a component is rendered.
    This Hook makes it useful to synchronize with the ``outside world,'' but can lead to complex behavior when it triggers a re-render.
    The expression passed to \reacttt{useEffect},~$e$, is called an \emph{Effect} \citep{Met25Syn}.
\end{itemize}
Actually, as we will see in~\cref{sec:pit}, the common pitfalls of Hooks are caused by misusing \reacttt{useState} and \reacttt{useEffect} in combination.
We now explain how these Hooks are used in React components in~\cref{subsec:useState,subsec:useEffect}, before presenting their exact semantics in~\cref{sec:sem}.

\subsection{The Peculiarity of \textbf{\texttt{useState}}}\label{subsec:useState}
Among the top~45 most frequent StackOverflow questions with the tag \textsf{[react-hooks]}, three---including the most frequent question by \citet{Pra19useState}---are about the semantics and timing of the \reacttt{useState} Hook and its state updates \citep{Pra19useState,vad18React,Tan19React}.

\begin{wrapfigure}[4]{l}{0.23\linewidth}
  \vspace*{-\intextsep}
  \begin{reactcode}
let Counter x =
  let s := 0 in
  [s, button (fun _ ->
    s := s+2)];;
Counter 0
  \end{reactcode}
\end{wrapfigure}
\noindent Consider a counter that increments its value by two when clicked.
If our language had mutable variables, one might na\"ively write the code on the left by defining a mutable variable~\reacttt{s} and adding two to it when the button is clicked.
However, this approach presents two significant problems:
\begin{enumerate}
  \item The variable~\reacttt{s} would be rebound to~|0| every time the component is invoked, preventing the runtime from re-reading the component without overwriting the previous state.
  \item Even if the previous problem were somehow solved, to determine when to update the UI to synchronize with the states, the runtime would need to track all mutable variables in each component, which would be inefficient in practice.
\end{enumerate}

The \reacttt{useState} Hook addresses these issues by providing functionality analogous to mutable variables in imperative languages, yet with a functional approach.
Instead of directly modifying state through assignment, the state is managed by the React runtime and can only be updated by the setter function returned by the Hook.
The \reacttt{useState} Hook allows the runtime to be aware of the state and handle its bookkeeping, therefore addressing the aforementioned two problems.

\begin{wrapfigure}[5]{r}{0.33\linewidth}
  \vspace*{-\intextsep}
  \begin{reactcode}
let Counter x =
  let (s, setS) = useState x in
  [s, button (fun _ ->
    setS (fun s -> s+1);
    setS (fun s -> s+1))];;
Counter 0
  \end{reactcode}
\end{wrapfigure}
Using \reacttt{useState}, we can correctly implement a functioning \reacttt{Counter} as shown on the right.
\reacttt{useState} returns a pair of values: \reacttt{s}~that stores the current state and the setter function~\reacttt{setS} that accepts an updater function.
Initially, \reacttt{s}~holds the value of~\reacttt{x}.
On subsequent renders, React is able to provide the correct current state without reinitializing to~\reacttt{x}.

When a button is pressed, the callback passed to it calls \reacttt{setS} twice with the updater~|fun s->s+1|.
Unlike direct assignment to~\reacttt{s} in the previous example, this does not immediately update~\reacttt{s}, but queues the update \citep{Met25Queueing}.
The update is processed by the runtime during the next render, ensuring that the view is re-rendered with the updated state.
This raises a question:
\begin{quote}
  \emph{Precisely \emph{when} in the runtime is this update handled?}
\end{quote}

To investigate when the callback provided to \reacttt{setS} is executed, we can add diagnostic \reacttt{print}s:
\begin{center}
  \begin{minipage}{.5\linewidth}
    \begin{reactcode}
let Counter x =
  print "Counter";
  let (s, setS) = useState x in
  print "Return";
  [s, button (fun _ ->
    setS (fun s -> s+1);
    setS (fun s -> print "Update"; s+1))];;
Counter 0
    \end{reactcode}
  \end{minipage}
  \begin{minipage}{.3\linewidth}
    \begin{console}[frame=topline]
Counter
Return
    \end{console}
    {\footnotesize\hspace{2em}\reacttt{button} \faHandPointUp[regular]}
    \begin{console}[frame=bottomline, firstnumber=3]
Counter
Update
Return
    \end{console}
  \end{minipage}
\end{center}
The console output shows that \verb+Counter+ and \verb+Return+ are printed during the initial rendering.
When the button is clicked, \verb+Update+ is printed \emph{after} \verb+Counter+ and \emph{before} \verb+Return+---a behavior that is not immediately apparent from the code alone.

In fact, the official React documentation does not specify exactly when the queued updates are processed, other than that they are processed sometime during the next render.
We will clearly define how this happened by providing a formal semantics of \Rtrace{} in~\cref{sec:sem}.

\subsection{The Peculiarity of \textbf{\texttt{useEffect}}}\label{subsec:useEffect}
While \reacttt{useState} hooks into the React runtime to manage component state, \reacttt{useEffect} hooks into the rendering lifecycle of components.
Every time a component is rendered, an Effect---a suspended computation or a thunk---provided to the \reacttt{useEffect} Hook is executed.
Effect allows developers to run arbitrary logic after the render.
Note that in React, it is possible to specify a set of variables called \emph{dependencies,} so that the Effect runs only if those variables have changed.
We only consider the simplest form of \reacttt{useEffect} in our language, where the Effect is run unconditionally after each render, and it is a straightforward extension to support dependencies.

\begin{wrapfigure}[7]{l}{0.35\linewidth}
  \vspace*{-\intextsep}
  \begin{reactcode}
let Counter x =
  let (s, setS) = useState x in
  useEffect (print (if s mod 2 = 0
    then "Even" else "Odd"));
  [s, button (fun _ ->
    setS (fun s -> s+1);
    setS (fun s -> s+1))];;
Counter 0
  \end{reactcode}
\end{wrapfigure}
\noindent Building upon the previous \reacttt{Counter} example, we can log \verb+Even+ or \verb+Odd+ depending on the value of the counter.
The \reacttt{useEffect} Hook makes this straightforward, as shown on the left.
This implementation prints either \verb+Even+ or \verb+Odd+ to the console after each render based on the parity of the counter.
This ``delayed'' logging is possible as an Effect expression passed to \reacttt{useEffect} is unevaluated\footnote{In React/JS, an Effect needs to be wrapped inside a callback.} until after the component has rendered.

While a simple \reacttt{print} as an Effect may seem trivial, this amounts to synchronizing with the outside world, which is the very purpose of the \reacttt{useEffect} Hook.
In real life, this can be a call to some logging system or user analytics system.

To understand a more peculiar aspect of \reacttt{useEffect}, consider the following \reacttt{SelfCounter} where an Effect creates an autonomous rendering cycle:
\begin{center}
  \begin{minipage}{.5\linewidth}
    \begin{reactcode}
let SelfCounter x =
  let (s, setS) = useState x in
  print s;
  useEffect (
    print "Effect";
    if s < 3 then
      setS (fun s -> s + 1));
  print "Return";
  [s];;
SelfCounter 0
    \end{reactcode}
  \end{minipage}
  \begin{minipage}{.3\linewidth}
    \begin{console}
0
Return
Effect
1
Return
Effect
2
Return
Effect
3
Return
Effect
    \end{console}
  \end{minipage}
\end{center}
This example demonstrates that Effects can create render cycles \emph{without any user interaction.}
Initially, the value of~\verb+s+,~0 is printed, followed by \verb+Return+ and \verb+Effect+ messages, showing that the Effect runs after the initial render.
After the Effect runs, it updates the state when~\verb+s < 3+, triggering a new render automatically.
The component autonomously increments from~0 to~3 without any user interaction.
When~\verb+s+ reaches~3, the Effect still runs (as shown by the final \verb+Effect+ output), but no further updates are queued.

This pattern creates a dangerous pitfall:
\begin{quote}
  \emph{Developers can unwittingly trigger excess render cycles with seemingly innocent Effects.}
\end{quote}
Our formal semantics in~\cref{sec:sem} captures these patterns, helping developers reason about when and how these cycles occur in their programs.
We explore how this pitfall leads to bugs in~\cref{subsubsec:infloop,subsec:unnec}.

\section{Re-Rendering Pitfalls with React Hooks}\label{sec:pit}
Having examined the peculiarities of \reacttt{useState} and \reacttt{useEffect} in~\cref{sec:over}, we now demonstrate how their interaction can lead to common render-related bugs.
These pitfalls are challenging as they stem from the opaque render semantics.
These issues motivate our formal semantics~(\cref{sec:sem}).

\subsection{Infinite Re-Rendering}\label{subsec:inf}
Infinite re-rendering is perhaps the most catastrophic bug one can encounter when using Hooks.
There are two different problems in this category of bugs: an infinite render loop due to always setting a different state in a \reacttt{useEffect}~(\cref{subsubsec:infloop}) and an infinite re-evaluation of a component body due to a top-level call to a setter function~(\cref{subsubsec:topset}).

\subsubsection{Infinite Render Loop with an Effect}\label{subsubsec:infloop}
Using \reacttt{useEffect} can easily lead to an infinite render loop---searching StackOverflow with the query `\textsf{"useEffect" "infinite"}' returns more than 1600~results \citep{Jay25Archive}.
We describe the essence of the issue here.

Infinite render loop occurs because setting state within an Effect triggers the runtime to \dec{Check}\footnote{We use \dec{red boldfaced sans-serif} to emphasize a decision, whose semantic meaning is formally introduced in~\cref{subsec:dom}.} if the state has changed, which is implemented by invoking the component body again.
If the state whose setter function is called is actually modified, the component re-renders and decides to run the \dec{Effect} again.
Essentially, this \dec{Check}-\dec{Effect} decision cycle creates a render loop.
In the following, we show a basic example that renders infinitely many times due to this mechanism.

\begin{wrapfigure}[4]{l}{0.35\linewidth}
  \vspace{-1.2\intextsep}
  \begin{reactcode}
let Inf x =
  let (s, setS) = useState 0 in
  useEffect (setS (fun s -> s+1));
  s;;
Inf 0
  \end{reactcode}
\end{wrapfigure}
\noindent The component \reacttt{Inf} on the left is a simple example that increments the state by one after each render in an Effect.
Upon each call to \reacttt{setS} while executing an Effect, the component ``remembers'' that its body needs to be scanned again to see if the state has actually changed.
We say that the component decides to \dec{Check} in the next render pass.
Since~\reacttt{s} always increments, the scan results in the runtime re-rendering \reacttt{Inf} indefinitely.

This render loop is similar to the example \verb+SelfCounter+ shown in~\cref{subsec:useEffect}, although this time, the condition that breaks the loop is absent.
A generalized issue is presented in~\cref{subsec:unnec}.

\subsubsection{Top-Level Call to a Setter Function}\label{subsubsec:topset}
Infinite loop caused by a top-level call to a setter function is also a common source of confusion---the most frequent StackOverflow question with both the \textsf{[reactjs]} and \textsf{[infinite-loop]} tag is about this issue \citep{Teh19Updating}.

When a component sets a state in the top-level, the runtime immediately \dec{Check}s the component again by re-evaluating it.
This causes React to discard the returned view and re-read the component.

An active \dec{Check} decision during reading a component is treated specially as a signal to \emph{retry} the component before the render, unlike a \dec{Check} decision while running an Effect.
Therefore, an unconditional top-level call to a setter function causes an infinite loop and is never correct.

\begin{wrapfigure}[4]{r}{0.33\linewidth}
  \vspace{-\intextsep}
  \begin{reactcode}
let Inf2 x =
  let (s, setS) = useState x in
  setS (fun s -> s);
  s;;
Inf2 0
  \end{reactcode}
\end{wrapfigure}
Unlike the infinite render loop bug previously discussed in~\cref{subsubsec:infloop}, \reacttt{Inf2} on the right does not even reach the screen.
|Inf2| simply causes the UI to show a blank screen.
Even if the state is always set to the same value~|0|, |Inf2| falls into an infinite loop.


\subsection{Unnecessary Re-Rendering}\label{subsec:unnec}
A simple example of an unnecessary re-render triggered by the \reacttt{useEffect} Hook is demonstrated in the following component \reacttt{Flicker}.\,
\reacttt{Flicker} starts with an initial state~|0|, and immediately after%
\begin{wrapfigure}[4]{l}{0.34\linewidth}
  \vspace*{-.2\intextsep}
  \begin{reactcode}
let Flicker x =
  let (s, setS) = useState x in
  useEffect (setS (fun _ -> 42));
  s;;
Flicker 0
  \end{reactcode}
\end{wrapfigure}%
the render, it updates the state to~|42|, triggering the runtime to \dec{Check} the component and causes a re-render.
A swift user might even see the transient state~|0| and notice a ``flicker'' in the UI.

While technically a similar problem to the one discussed in~\cref{subsubsec:infloop}, this is a performance issue, in contrast to the other which was an evident bug.
When the component needs to access an external resource in order to set a state, setting a state after the call to \reacttt{useEffect} may be necessary, but in other scenarios, it is a waste of a render cycle to do so.

\begin{wrapfigure}[6]{r}{0.39\linewidth}
  \vspace*{-\intextsep}
  \begin{reactcode}
let Child setS =
  useEffect (setS (fun _ -> false));
  ();;
let Parent b =
  let (s, setS) = useState b in
  if s then Child setS else ();;
Parent true
  \end{reactcode}
\end{wrapfigure}
Note that this issue can happen inter-component as well in a more subtle manner.
In the example on the right, the parent component \reacttt{Parent} decides to render the child component \reacttt{Child} based on the state~\reacttt{s}.
Initially, \reacttt{Parent} renders \reacttt{Child}, passing its setter function \reacttt{setS} to \reacttt{Child}.
After the render, Effect in \reacttt{Child} gets invoked, queuing an update to the state~\reacttt{s} of \reacttt{Parent} to \reacttt{false}.
In the end, \reacttt{Parent} re-renders, but this time, it will not render \reacttt{Child}, as the state~\reacttt{s} becomes \reacttt{false}.

\section{An Operational Semantics for the Essence of React Hooks}\label{sec:sem}
We now provide a formal operational semantics \Rtrace{} that is both high-level enough to abstract away React's implementation details, yet precise enough to reason about rendering behavior~(\cref{sec:rules}).
Up until now, we have implicitly discussed ``values'' using the syntax of \Rtrace, and we formally introduce semantic objects in~\cref{subsec:dom}.
Then we describe the operational semantics of \Rtrace{} in~\cref{subsec:sem}.

There are two layers that constitute the syntax and semantics of React Hooks:
\begin{description}
  \item[Base layer] describes standard computations within components.
    An excerpt from \cref{fig:syntax}:
    \begin{center}
      \begin{bnf}
        e : \dom{Exp} ::= () // true // false // $n$ // $x$  // $e \oplus e$ // $\print{e}$ // $\cond{e}{e}{e}$
          | $\seq{e}{e}$ // $\func{x}{e}$ // $\app{e}{e}$ // $\letbind{x}{e}{e}$ // $\dotsb$
      \end{bnf}
    \end{center}
  \item[Render layer] controls the rendering of components.
    An excerpt from \cref{fig:syntax}:
    \begin{center}
      \begin{bnf}[rccll]
        e : \dom{Exp} ::= $\dotsb$ // $C$ // $[\Overline{e}]$ // $\stbind{x}{e}{e}$ // $\eff{e}$
      \end{bnf}
    \end{center}
\end{description}

The base logic layer is crucial in order to build a rich UI with complex business logic.
For instance, Booleans and conditionals are required to write a toggleable button;
Functions and function applications are necessary to perform actions on behalf of user interaction.

Our choice of the base layer is not the only possible one.
We have chosen a minimal combination of features that can support the essence of Hooks---we can extend our base with strings, objects, recursive functions, etc.
Here, we will not burden ourselves with a layer not too relevant to the render logic.
In fact, all these features are included in our implementation of \Rtrace~(\cref{sec:impl}).

The render logic layer seems quite minimal, but we have seen its subtleties in~\cref{sec:over,sec:pit}.
We include all the base values to be used to represent views, and in addition, we include an array to model nested views and multiple children.
This is not a general purpose array---its elimination is when being used at the returning position of a component.
An example of an array view has been shown in the running examples in~\cref{sec:over}.
Each \reacttt{useState} Hook is (implicitly) labeled so that it is uniquely identified.
The returned value of \reacttt{useState} is a pair of current value~$x$ and the setter function~$x_{\texttt{set}}$.
\reacttt{useEffect} Hooks need not be labeled.

Note that we are only concerned with the timing aspects of the render semantics, not the visual layout; hence interactive UI elements such as a button is identified with its event handler.
As such, \reacttt{button f} is simply desugared into its event handler~\reacttt{f}.

\subsection{Semantic Objects}\label{subsec:dom}
Semantic objects include base values~(\cref{subsubsec:base}) as well as the machinery that models the React runtime and Hooks~(\cref{subsubsec:machinery}).
The complete set of semantic objects is reproduced in~\S A.1 for reference.

\subsubsection{Base Values}\label{subsubsec:base}
The semantic objects for base values are formalized as follows:
\begin{center}
  \begin{minipage}{0.4791\textwidth}
    \begin{center}
      \begin{bnf}[rccll]
        v : $\dom{Val}$ ::= $k$ // $cl$ // $\dotsb$ :
        ;;
        k : $\dom{Const}$ ::= $\<\>$ // $\TT$ // $\FF$ // $n$ :
        ;;
        n :in: $\mathbb{Z}$
      \end{bnf}
    \end{center}
  \end{minipage}%
  \begin{minipage}{0.4791\textwidth}
    \begin{center}
      \begin{bnf}[rccll]
        cl : $\dom{Clos}$ ::= $\clos{x}{e}{\sigma}$ :
        ;;
        \sigma : $\dom{Env}$ ::= $[\Overline{x \mapsto v}]$ :
        ;;
        \omega : $\dom{Buffer}$ ::= $[\Overline{v}]$
      \end{bnf}
    \end{center}
  \end{minipage}
\end{center}
Base values are standard---constants~$k$ include the unit value~$\<\>$, Booleans~$\TT$ and~$\FF$, and integers~$n$.
To model the base layer using big-step semantics \citep{Kah87Natural} in~\cref{subsubsec:body}, bound variables are recorded in an environment~$\sigma = [\Overline{x \mapsto v}]$. 
A function evaluates into a closure~$cl = \clos{x}{e}{\sigma}$. 
A buffer~$\omega = [\Overline{v}]$ holds a list of printed values.

\subsubsection{React Machinery}\label{subsubsec:machinery}
\paragraph{Extended Values}
To model the React machinery, we extend base values~$\dom{Val}$:
  \begin{center}
    \begin{minipage}{0.4791\textwidth}
      \begin{center}
        \begin{bnf}[rccll]
          v : $\dom{Val}$ ::=  $\dotsb$ // $C$  // $cs$ // $[\Overline{s}]$ // $\<\ell, p\>$ :
          ;;
          cs : $\dom{ComSpec}$ ::= $\<C, v\>$ :
          ;;
          s : $\dom{ViewSpec}$ ::= $k$ // $cl$ // $cs$ // $[\Overline{s}]$ :
        \end{bnf}
      \end{center}
    \end{minipage}%
    \begin{minipage}{0.4791\textwidth}
      \begin{center}
        \begin{bnf}[rccll]
          \delta : $\dom{DefTable}$ ::= $[\Overline[\ell]{C \mapsto \deftabent{x}{e}}]$ :
          ;;
          \ell :in: $\mathbb{N}$ :
          ;;
          p :in: $\dom{Path} = \mathbb{N}$ :
        \end{bnf}
      \end{center}
    \end{minipage}
  \end{center}
A component name~$C$ is a value as-is.
All component names are determined statically and their definitions are looked up using a \emph{definition table~$\delta = [\Overline[\ell]{C \mapsto \deftabent{x}{e}}]$}.
Thus (mutually) recursive components are permitted in \Rtrace, and we do not model legacy higher-order components \cite{Met25Higher}.
A \emph{view spec} is either a constant value~$k$, a closure~$cl$, a component spec~$cs$, or an array view spec~$[\Overline{s}]$.
A constant and a closure view spec represent leaf views, whereas a component and an array view spec represent composite views.
A constant view spec is simply realized into a view that shows its string representation.
A closure view spec, when realized into a view, represents an event handler---this models UI elements such as a button, an input field, a checkbox, etc.
A component spec~$cs = \<C, v\>$ is simply a pair of a component name and its argument.
This represents a specification of a view that may be realized into a view hierarchy.
An array view spec~$[\Overline{s}]$ is an array of view specs~$s$, representing structured child views.

A setter closure~$\<\ell, p\>$, which is the semantic value of a setter function, is a pair of a label~$\ell$ from the originating \reacttt{useState} Hook and a \emph{path}~$p$ to the corresponding view in the view tree.
A path~$p$ is a unique natural number used to identify each view in the view tree.
We shall take a look at the view tree structure~$m$ in a moment.

\paragraph{Tree Memory}
We introduce \emph{tree memory}~$m$ to model the view hierarchy:
\begin{center}
  \begin{bnf}[rccll]
    m : $\dom{TreeMem}$ ::= $[\Overline{p \mapsto \pi}]$ :
    ;;
    \pi : $\dom{View}$ ::= $\view{cs}{\{\Overline{d}\}}{\rho}{q}{t}$ :
    ;;
    d : $\dom{Decision}$ ::= \dec{Check} // \dec{Effect} :
  \end{bnf}
  \vspace*{-0.2\baselineskip}
  \begin{center}
    \begin{minipage}{0.4791\textwidth}
      \begin{center}
        \begin{bnf}[rccll]
          t : $\dom{Tree}$ ::= $k$ // $cl$ // $p$ // $[\Overline{t}]$ :
          ;;
          \rho : $\dom{SttStore}$ ::= $[\Overline[\ell]{\ell \mapsto \sttstent{v}{q}}]$ :
        \end{bnf}
      \end{center}
    \end{minipage}%
    \begin{minipage}{0.4791\textwidth}
      \begin{center}
        \begin{bnf}[rccll]
          q : $\dom{JobQ}$ ::= $[\Overline[\ell]{cl}]$ :
          ;;
          \Sigma : $\dom{Context}$ ::= $m$ // $\pi$ :
        \end{bnf}
      \end{center}
    \end{minipage}
  \end{center}
\end{center}
The entire view hierarchy is stored in a tree memory~$m$ which maps each path~$p$ to a view~$\pi$.
A view is a realized component that is mounted or to-be-mounted into a tree memory.
Each mounted view~$\pi$ is assigned a unique path~$p$ where $p$~is a valid path of a tree memory~$m$, i.e., $p \in \domain m$.
It consists of a component spec~$cs$, a decision~$d$, an effect queue~$q$, and a child tree~$t$.
When a component body is evaluated, the runtime keeps track of its decision set---\dec{Check} for checking the component for retry or re-render, and \dec{Effect} for running Effects after this render.
Decisions \dec{Check} and \dec{Effect} have been explained informally in~\cref{subsubsec:infloop,subsubsec:topset}, and all components first begin with an empty decision set.
A state store~$\rho$ for each view is used to store the current state value~$v$ and the queued updates~$q$ from setter function from the \reacttt{useState} Hook.
These are keyed using the corresponding label~$\ell$ of \reacttt{useState}.
An effect queue~$q$ is a queue of Effects from the \reacttt{useEffect} Hook.
A child~$t$ is either a terminal view~$k$, an event handler~$cl$, a path~$p$ to another view, or an array of children~$[\Overline{t}]$.
We also introduce a context~$\Sigma$, which can be either a tree memory~$m$ or a view~$\pi$.
This distinction exists because one evaluation rule (\Rule{AppSetNormal}) requires the whole tree memory~$m$, while the others only require the local view~$\pi$ in their context.
The whole-memory context is used to describe a component updating another, e.g., a button in a child component updating its parent's state.

\begin{figure}[t]
  \begin{center}
    \begin{minipage}{.27\textwidth}
      \begin{reactcode}
  let Bin n =
    if n = 0 then () else
    [Bin(n-1), Bin(n-1)];;
  Bin 2
      \end{reactcode}
    \end{minipage}%
    \begin{minipage}{.73\textwidth}
      \includegraphics[width=\linewidth]{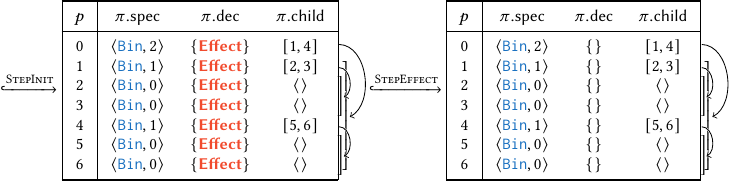}
    \end{minipage}
  \end{center}
  \caption{Visualization of a tree memory and render step transitions for a simple recursive component.}\label{fig:bin}
\end{figure}

To give a visual intuition of how tree memory represents view hierarchy, \cref{fig:bin} shows a tree memory for a simple recursive component~\reacttt{Bin} that creates a complete binary tree of height~2.
The root tree is assigned path~$p = 0$, and each subtree receives a unique path in the order of initialization (\cref{fig:init},~\cref{subsubsec:body}).
The view hierarchy can be reconstructed by traversing the \field{child} field of each view in the tree memory.
The render step transitions \Rule{StepInit} and \Rule{StepEffect} along with the decisions shown in \cref{fig:bin} are explained in detail in~\cref{subsubsec:lifecycle}.

\paragraph{Phase}
While evaluating an expression, we maintain a phase~$\phi$:
\begin{center}
  \begin{bnf}[rccll]
    \phi : $\dom{Phase}$ ::= \phase{Init} // \phase{Succ} // \phase{Normal} :
  \end{bnf}
\end{center}
\phase{Init}\footnote{We use \phase{blue sans-serif} to emphasize a phase.} phase is used when evaluating a component for the first time to be mounted into a tree memory.
At this phase, all states are evaluated from the initial expressions of \reacttt{useState}s.
Successive calls to a component function are done in \phase{Succ} phase, where states are retrieved from a state store.
The main expression, Effects, and event handlers are evaluated in \phase{Normal} phase.

\paragraph{Mode}
We model each (re-)render as a transition of global states, and we introduce a mode~$\mu$ to keep track of the state of the \Rtrace{} engine:
\begin{center}
  \begin{bnf}[rccll]
    \mu : $\dom{Mode}$ ::= $\rendermode$ \textrm{(rendered)} // $\checkmode$ \textrm{(check)} // $\eloopmode$ \textrm{(event loop)} :
  \end{bnf}
\end{center}
Rendered mode~$\rendermode$ represents a state where the UI has been rendered on the screen and is waiting for the queued Effects to run.
Check mode~$\checkmode$ represents a state where the runtime will check for a re-render.
When a re-render is not required, we enter event loop mode~$\eloopmode$ and wait for an input.

\subsection{Operational Semantics}\label{subsec:sem}
All semantic functions\footnote{Semantic functions are typeset using an \sem{italicized sans-serif} font.} and evaluation relations, along with their dependencies and required contexts, are summarized in \cref{fig:dependencies}.
\begin{figure}[t]
  \centering
  \includegraphics[width=.85\textwidth]{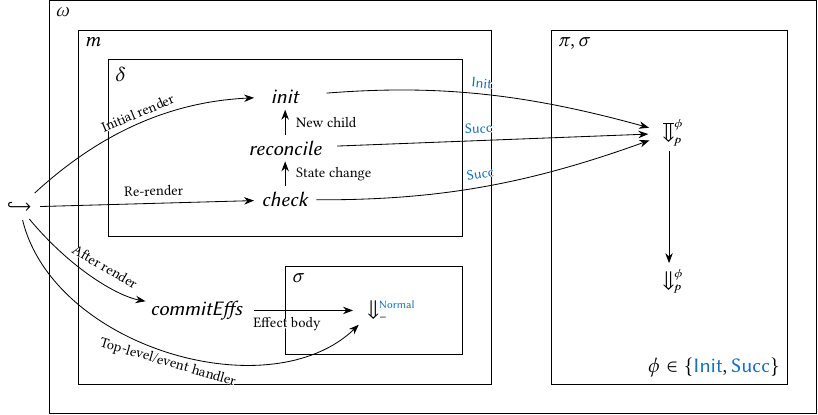}
  \Description{A diagram of dependencies between semantic functions.}
  \caption{Semantic function dependencies.}\label{fig:dependencies}
\end{figure}
We give a brief overview of the semantic functions and relations before we formally introduce them in~\cref{subsubsec:body,subsubsec:commiteffs,subsubsec:reconciliation}:
\begin{itemize}
  \item $\<e, \delta\> \text{ or }\<t, m, \omega, \delta, \mu\> \smallstep \<t, m', \omega', \delta, \mu'\>$ is a render step transition (\cref{fig:step}).
    Read this as ``Initially, under definition table~$\delta$, main expression~$e$ transitions to root tree~$t$, tree memory~$m'$, and output buffer~$\omega'$, entering mode~$\mu'$,'' or ``Given root tree~$t$ under definition table~$\delta$, tree memory~$m$ and output buffer~$\omega$ in mode~$\mu$ transition to~$m'$ and $\omega'$, entering mode~$\mu'$.''

  \item $\Step<\phi><p>[\Sigma, \sigma]{e}{v, \Sigma', \omega}$ is an evaluation of an expression (\cref{fig:eval}).
    Read this as ``In phase~$\phi$ at path~$p$, expression~$e$ under environment~$\sigma$ and context~$\Sigma$ evaluates to value~$v$, producing a modified context~$\Sigma'$ and output~$\omega$.''

  \item $\Step*<\phi><p>[\pi, \sigma]{e}{s, \pi', \omega}$ is a possibly retrying evaluation of a component body until it becomes idle (\cref{fig:evalmult}).
    Read this as ``In phase~$\phi$ at path~$p$, component body~$e$ of view~$\pi$ under environment~$\sigma$ eventually evaluates to view spec~$s$, producing a modified view~$\pi'$ and output~$\omega$.''

  \item $m, \delta \vdash \sem{init}(s) = \<t, m', \omega\>$ is a rendering of a view spec in an initial render (\cref{fig:init}).
    Read this as ``Under definition table~$\delta$, view spec~$s$ initially renders into tree~$t$, modifying tree memory from~$m$ to~$m'$ and printing~$\omega$.''

  \item $m \vdash \sem{commitEffs}(t) = \<m', \omega\>$ is committing queued Effects (\cref{fig:commiteffs}).
    Read this as ``Committing Effects of tree~$t$ modifies tree memory from~$m$ to~$m'$ and prints~$\omega$.''

  \item $m, \delta \vdash \sem{check}(t) = \<\mu, m', \omega\>$ is checking and updating a tree for a re-render (\cref{fig:check}).
    Read this as ``Under definition table~$\delta$, tree~$t$ re-renders with modified tree memory~$m'$ from~$m$ and prints~$\omega$'' when~$\mu = \rendermode$, and ``Under definition table~$\delta$, tree~$t$ does not re-render but modifies tree memory from~$m$ to~$m'$ and prints~$\omega$'' when~$\mu = \eloopmode$; $\mu$ cannot be $\checkmode$.

  \item $m, \delta \vdash \sem{reconcile}(t, s) = \<t', m', \omega\>$ is reconciling a tree with a possibly updated view spec (\cref{fig:reconcile}).
    Read this as ``Under definition table~$\delta$, tree~$t$ is reconciled as~$t'$ with respect to view spec~$s$, modifying tree memory from~$m$ to~$m'$ and printing~$\omega$.''
\end{itemize}

\begin{nota}
We explain notation used in the evaluation rules.
When a rule has premises~$J_i$ for~$l \le i \le u$, we write~$(J_i)_{i = l}^u$.
We use~$\concat$ for list concatenation, e.g., $x \concat x'$, and~$\Concat_{i=l}^u x_i$ for concatenating~$x_l, \dotsc, x_u$.
For bindings~$X$, $X[x \mapsto v]$ updates~$X$ with~$x \mapsto v$, and~$X[x]$ retrieves the value~$v$ when $X$~has a binding~$x \mapsto v$.
For a structure~$X$ with a field~\field{f}, $X\{\fv{f}{v}\}$ updates field~\field{f} to~$v$, and~$X.\field{f}$ accesses field~\field{f}.
We use meta-level let-bindings~``let $x = v$ in $\dotsc$'' inside meta-level structure or list updates when convenient.
Meta-level conditional~$\scond{b}{x}{y}$ chooses~$x$ when~$b$ holds and chooses~$y$ otherwise.
\end{nota}

\subsubsection{Render Loop}\label{subsubsec:lifecycle}
We model the React event loop with render step transitions in \cref{fig:step}.
\begin{figure}[t]
  \centering
  \input{semantics/step}
  \Description{The rules modeling render step transitions.}
  \caption{Render step transitions.}\label{fig:step}
\end{figure}

Given a program~$P$, the loop begins with an initial state~$\<e, \delta\>$, where $e$~is the main expression of~$P$ and $\delta$ is the definition table of all components defined in~$P$.
Note that the definition table~$\delta$ and the root tree~$t$ (once initialized) remain unchanged during transitions.

First, we evaluate the main expression~$e$ into a view spec~$s$ and initialize it into a tree~$t$ with an updated tree memory~$m$; we enter the rendered mode~$\rendermode$ (\Rule{StepInit}).
We call this a rendered mode because the root tree~$t$ and all of its descendants as described in~$m$ have been \emph{rendered on screen.}
Then we commit all the queued Effects of~$t$ and its descendants, after which we enter the check mode~$\checkmode$ (\Rule{StepEffect}).
In check mode~$\checkmode$, we check for any state update under~$t$.
If there is an update, we render again and enter the rendered mode~$\rendermode$; otherwise, we enter the event loop mode~$\eloopmode$ (\Rule{StepCheck}).
In event loop mode~$\eloopmode$, we wait for user input to be handled by an event handler.
If input occurs, we evaluate the corresponding handler body, which then requires a state update check (\Rule{StepEvent}).

Returning to the simple example shown in \cref{fig:bin}, all the views in the tree memory contain \dec{Effect} decisions after the \Rule{StepInit} transition, as they are freshly mounted and React needs to check for any defined Effects (of which there are none in the example).
All the \dec{Effect} decisions are then cleared after the \Rule{StepEffect} transition.
No state updates have been made---there are no \dec{Check} decisions---and \Rule{StepCheck} transitions to the same tree memory, this time in event loop mode~$\eloopmode$.

To elaborate on how the transition \Rule{StepInit} evaluates the main expression and initializes the evaluated view spec, we turn our attention to the evaluation rules for expressions~(\cref{subsubsec:eval}) and the initialization rules (\cref{subsubsec:body}).

\subsubsection{Evaluating an Expression}\label{subsubsec:eval}
\paragraph{Single Evaluation}
We present how an expression is evaluated in a big-step semantics in \cref{fig:eval}.
The rules relevant to the render logic are included here, which includes \reacttt{useState} and the \reacttt{useEffect} Hooks, component application, and state updates.
Standard rules related to the base logic are not presented---\Rule{AppFunc} and \Rule{Print} are included for clarity.
The complete operational semantics of expressions is available in~\S A.2. 
\begin{figure}[t]
  \centering
  \hfill\fbox{$\Step<\phi><p>[\Sigma, \sigma]{e}{v, \Sigma', \omega}$}
  \begin{mathpar}
    \input{semantics/eval-react}
  \end{mathpar}
  \Description{An excerpt of the rules modeling an expression evaluation.}
  \caption{Evaluation of an expression (an excerpt).}\label{fig:eval}
\end{figure}

An environment~$\sigma$, context~$\Sigma$, phase~$\phi$, and path~$p$ are required for evaluating an expression~$e$.
A path is not available (written as $-$ in \cref{fig:eval}) in the \phase{Normal} phase.
Evaluation rules that are only applicable in the context of a component body evaluation (\Rule{AppSetComp}, \Rule{SttBind}, \Rule{SttReBind}, and \Rule{Eff}), i.e., in the \phase{Init} and \phase{Succ} phase, require the context~$\Sigma$ to be a view~$\pi$.
A noteworthy case is \Rule{AppSetNormal}, the only rule applicable in the \phase{Normal} phase only.
All other rules permit both~$\pi$ and~$m$ as a context~$\Sigma$.
The output buffer~$\omega$ stores the printed values sequentially (\Rule{Print}).

It is important to note that component definitions are not required during evaluation of an expression.
Hence, a component application is nothing more than packing the evaluated component name~$C$ and the argument~$v$ in a pair (\Rule{AppCom}).
The actual function bound to the component name is neither invoked nor required at this point.

The \reacttt{useState} Hook behaves differently in the \phase{Init} and \phase{Succ} phases.
In the \phase{Init} phase, the initial value~$v_1$ is evaluated from the argument~$e_1$, which is then stored in the view's state store \field{sttst} at label~$\ell$ of \reacttt{useState}, along with an empty state update queue \field{sttq} (\Rule{SttBind}).
In addition, the state variable~$x$ and the setter function~$x_{\texttt{set}}$ are bound to the initial value~$v_1$ and the setter closure, which is a pair~$\<\ell, p\>$, so that the value of the state can be read and set.

In the \phase{Succ} phase, all the queued updates in \field{sttq} are processed while evaluating the corresponding \reacttt{useState} Hook (\Rule{SttReBind}).
After evaluating all the updates, we compare the final state value~$v_n$ with the initial value~$v_0$.
If they are equivalent ($v_n \equiv v_0$), we simply keep the previous decision \field{dec}.
Otherwise ($v_n \nequiv v_0$), the component adds the \dec{Effect} decision to its decision so that it runs its Effects after rendering.
The state update queue \field{sttq} is flushed and the state variable and the setter function are bound appropriately.

State updates via setter function applications behave differently when evaluating a component ($\phi \in \{\phase{Init}, \phase{Succ}\}$) compared to when evaluating Effects or event handlers ($\phi = \phase{Normal}$).
During component body evaluation, calling a setter function of another component is not allowed, hence the path~$p$ in the setter closure must match with the context (\Rule{AppSetComp}).
The official React runtime logs an error message when a component evaluation invokes a setter function of another component.
Setting the component's state during the evaluation of its body adds the \dec{Check} decision and queues the provided update closure~$cl$.
The \dec{Check} decision triggers a re-evaluation of the component---this behavior is formalized in \cref{fig:evalmult}.

State updates in the \phase{Normal} phase lift the restriction of ``inter-component'' updates.
The setter closure~$\<\ell, p\>$ is used to queue the update closure~$cl$ in the correct tree at path~$p$ and the correct label~$\ell$ in the state queue \field{sttq} (\Rule{AppSetNormal}).
The \dec{Check} decision is added to the view's decision, to mark that a state update has been queued.
The marked views are later processed in batches~(\cref{subsubsec:reconciliation}).

\paragraph{Retrying Evaluation}
The evaluation of a component body starts with a \emph{retrying evaluation}, where the body is repeatedly evaluated until the \(\dec{Check}\) decision is no longer present.  
Before each evaluation, the Effect queue is cleared so that Effects are re-collected during execution.  
The rationale is that re-evaluated view specs are discarded except for the last one, and Effects associated with the discarded view specs must also be discarded.
In addition, the \(\dec{Check}\) decision is removed beforehand to determine whether the evaluation triggers another \(\dec{Check}\).  
The component body is then evaluated, and evaluation stops when \(\dec{Check}\) is no longer present (\Rule{EvalOnce}).  
If the evaluation results in another \(\dec{Check}\) decision, re-evaluation is performed in the \phase{Succ} phase (\Rule{EvalMult}).

\begin{figure}[t]
  \centering
  \input{semantics/evalmult}
  \Description{The rules modeling a retrying component body evaluation.}
  \caption{Retrying evaluation of a component body.}\label{fig:evalmult}
\end{figure}

Note that a component may indefinitely decide to \dec{Check}, which will lead to an infinite retrial issue described in \cref{subsubsec:topset}.
React runtime raises an exception after 25~retrials.

\subsubsection{Initial Render}\label{subsubsec:body}
When a view is initially rendered, it is initialized from a view spec (\cref{fig:init}).
A constant (\Rule{InitConst}), closure (\Rule{InitClos}), and array view spec initialization (\Rule{InitArray}) are straightforward.
\begin{figure}[t]
  \centering
  \input{semantics/init}
  \Description{The rules modeling a view spec initialization.}
  \caption{Initialization of a view spec.}\label{fig:init}
\end{figure}

To initialize a component spec~$\<C, v\>$, a fresh path~$p$ with respect to the tree memory~$m$ is generated, and the definition table~$\delta$ is used to look up the component definition for name~$C$ (\Rule{InitCom}).
First, an empty view with the component spec, an empty decision set, and a (placeholder)~unit child is created.
This view is used as a context to evaluate the component body in the \phase{Init} phase with the parameter~$x$ bound to~$v$, which produces a view spec~$s$ and an updated view~$\pi$.
After mounting the updated~$\pi$ at path~$p$ in tree memory~$m$, view spec~$s$ is recursively initialized to get the child tree~$t$.
Finally, the \dec{Effect} decision is added to the view to ensure the queued Effects are run after the initial render, and the \field{child} field is set to~$t$.
Note that the initialization of the child tree does not modify the parent view~$\pi$.

\subsubsection{Committing Effects After Render}\label{subsubsec:commiteffs}
After a view has been rendered, the queued Effects must be executed.
The rules in \cref{fig:commiteffs} describe this process.
For a constant (\Rule{CommitEffsConst}) and a closure (\Rule{CommitEffsClos}), there are no Effects to commit, so the tree memory remains unchanged.
For an array of trees (\Rule{CommitEffsArray}), we recursively commit the Effects of each child tree.
\begin{figure}[t]
  \centering
  \input{semantics/commitEffs}
  \Description{The rules modeling committing Effects.}
  \caption{Committing Effects.}\label{fig:commiteffs}
\end{figure}

For a path to a view, Effects are committed only when the view has decided to.
If the view has not decided to commit \dec{Effect}s, only the Effects of its child are executed, and the view's Effects are skipped (\Rule{CommitEffsPathIdle}).
If the view's decision includes \dec{Effect}, the Effects of its child are committed first, and then the view's Effects are executed in order (\Rule{CommitEffsPath}).
Each Effect is evaluated in the \phase{Normal} phase, allowing it to update any component's states in the tree.

Effects execution follows a post-order traversal---the child's Effects are executed first, followed by those of its parent.
Additionally, Effects are executed following their original syntactic order.

\subsubsection{Checking, Re-Render, and Reconciliation}\label{subsubsec:reconciliation}
\paragraph{Checking for Re-Render}
In check mode~$\checkmode$, which is after committing Effects or handling an event, the runtime checks and re-renders a tree if needed.
This is carried out with semantic function~\sem{check} shown in \cref{fig:check}.
When a tree or any of its descendants requires a re-render, \sem{check} performs the re-render and returns either of~$\rendermode$ or~$\eloopmode$.
It also returns the modified tree memory.
\begin{figure}[t]
  \centering
  \input{semantics/check}
  \Description{The rules modeling checking a tree for re-render.}
  \caption{Checking a tree for re-render.}\label{fig:check}
\end{figure}

For a constant value (\Rule{CheckConst}) and a closure (\Rule{CheckClos}), no render is needed and \sem{check} always returns~$\eloopmode$ without any modification to the tree memory.
For a tree array, we check each of its trees recursively in sequence (\Rule{CheckArray}).
Note the~$\sqcup$ operation in \Rule{CheckArray}, defined as a commutative operator between~$\{\eloopmode, \rendermode\}$ where~$\eloopmode \sqcup \eloopmode = \eloopmode$ and $\rendermode \sqcup \rendermode = \rendermode \sqcup \eloopmode = \rendermode$.
This is because returning~$\rendermode$ indicates that any descendant has been re-rendered.

The interesting cases occur with path references to views.
An idle view that does not need \dec{Check}ing simply skips and recurse on its child (\Rule{CheckIdle}).
If a view's decision includes \dec{Check}, we need to re-evaluate the component body.

When the re-evaluation of a view with \dec{Check} decision results in a decision without \dec{Effect}, it means that the view eventually settled with the same state as before, and no re-render happens (\Rule{CheckNoEffect}).
When the re-evaluation decides to commit \dec{Effect}s, we need to reconcile the child tree against the new view spec (\Rule{CheckEffect}).
In this case, we return $\rendermode$ to indicate that the view's state has changed.
Note that the re-evaluation premises of \Rule{CheckNoEffect} and \Rule{CheckEffect} do not modify the child~$\pi.\field{child}$, as an evaluation does not touch the \field{child} field (\cref{subsubsec:eval}).

\paragraph{Reconciliation}
\begin{figure}[tb]
  \centering
  \input{semantics/reconcile}
  \Description{The rules modeling reconciling a tree with a view spec.}
  \caption{Reconciliation of a tree with a view spec.}\label{fig:reconcile}
\end{figure}
When some states of a view update, it needs to be \emph{reconciled} with the updated view spec, which is described in \cref{fig:reconcile}.
For reconciling an array tree against an array view spec, we reconcile each child tree with the corresponding view spec (\Rule{ReconcileArray}).

For a path to a component view, if the component name is the same as before, we re-evaluate the component body in the \phase{Succ} phase and then reconcile the old child with the new view spec (\Rule{ReconcileComEffect}).
If the component name has changed, we re-initialize the view spec as this is a completely new component compared to before (\Rule{ReconcileComNew}).

For all other cases, such as when a constant, closure, or array tree transitions to a different type, we re-initialize the view spec from scratch (\Rule{ReconcileOther}).

This reconciliation process allows React to efficiently update the UI when state changes occur, preserving existing (virtual) DOM nodes where possible, and only rebuilding the parts that have actually changed \citep{Cac16React}.

\subsection{An Illustrative Example}\label{subsec:example}
\begin{wrapfigure}[7]{r}{0.37\linewidth}
  \vspace*{-\intextsep}
  \begin{reactcode}
let Demo x =
  let (s, setS) = useState x in
  let f = fun s -> s + 1 in
  if s = 0 then setS f;
  useEffect (if s = 1 then setS f);
  if s <= 1 then () else
    button (fun _ -> setS f);;
Demo 0
  \end{reactcode}
\end{wrapfigure}
To illustrate the operational semantics of Hooks in action, we walk through the execution of the \reacttt{Demo} component on the right.
The example incorporates both the \reacttt{useState} and \reacttt{useEffect} Hooks, showcasing the unnecessary re-rendering issue (\cref{subsec:unnec}) and the top-level setter issue (\cref{subsubsec:topset}).
\reacttt{Demo} also demonstrates reconciliation by updating its child view from~\reacttt{()} to \reacttt{button} (or a closure) after the re-render.

The render step transitions for |Demo 0| is illustrated in the diagram below.
The root path is~$p_0$ and each view in each step is indexed, e.g.,~$\pi_0, \dots, \pi_4$.
For brevity, the state store has been flattened as there is only a single state in \reacttt{Demo} and closures are abbreviated.
In addition, the intermediate state that triggers a retry is listed below the initial configuration~$\<\texttt{Demo 0}, [\texttt{Demo} \mapsto \dots]\>$.
\begin{center}
  \smaller[2]
  \setlength{\tabcolsep}{2.5pt}

  \hphantom{$\overset{\Rule{StepCheck}}{\smallstep}$}
  \begin{minipage}[t]{0.22\linewidth}
    \centering
    $\<\texttt{Demo 0}, [\texttt{Demo} \mapsto \dots]\>$ \\[1ex]
    \smaller[1]
    \begin{tabular}{|rl|}
      \hline
      \field{dec} & $\{\dec{Check}\}$ \\
      \field{val} & $0$ \\
      \field{sttq} & $[\<\lfun{\texttt{s}}{\texttt{s+1}}, \_\>]$ \\
      \field{effq} & $[\<\texttt{if s=1.\!.\!.}, \_\>]$ \\
      \field{child} & $\<\>$ \\
      \hline
    \end{tabular}
    (During \Rule{EvalMult})
  \end{minipage}
  $\overset{\Rule{StepInit}}{\smallstep}$
  \begin{minipage}[t]{0.22\linewidth}
    \centering
    $\<p_0, [p_0 \mapsto \pi_0], \delta, \rendermode\>$ \\[1ex]
    \smaller[1]
    \begin{tabular}{|rl|}
      \hline
      \field{dec} & $\{\dec{Effect}\}$ \\
      \field{val} & $1$ \\
      \field{sttq} & $[]$ \\
      \field{effq} & $[\<\texttt{if s=1.\!.\!.}, \_\>]$ \\
      \field{child} & $\<\>$ \\
      \hline
    \end{tabular}
  \end{minipage}
  $\overset{\Rule{StepEffect}}{\smallstep}$
  \begin{minipage}[t]{0.2\linewidth}
    \centering
    $\<p_0, [p_0 \mapsto \pi_1], \delta, \checkmode\>$ \\[1ex]
    \smaller[1]
    \begin{tabular}{|rl|}
      \hline
      \field{dec} & $\{\dec{Check}\}$ \\
      \field{val} & $1$ \\
      \field{sttq} & $[\<\lfun{\texttt{s}}{\texttt{s+1}}, \_\>]$ \\
      \field{effq} & $[]$ \\
      \field{child} & $\<\>$ \\
      \hline
    \end{tabular}
  \end{minipage}
  \\[1.5ex]
  $\overset{\Rule{StepCheck}}{\smallstep}$
  \begin{minipage}[t]{0.2\linewidth}
    \centering
    $\<p_0, [p_0 \mapsto \pi_2], \delta, \rendermode\>$ \\[1ex]
    \smaller[1]
    \begin{tabular}{|rl|}
      \hline
      \field{dec} & $\{\dec{Effect}\}$ \\
      \field{val} & $2$ \\
      \field{sttq} & $[]$ \\
      \field{effq} & $[\<\texttt{if s=1.\!.\!.}, \_\>]$ \\
      \field{child} & $\<\lfun{\_}{\texttt{setS f}}, \_\>$ \\
      \hline
    \end{tabular}
  \end{minipage}
  $\overset{\Rule{StepEffect}}{\smallstep}$
  \begin{minipage}[t]{0.2\linewidth}
    \centering
    $\<p_0, [p_0 \mapsto \pi_3], \delta, \checkmode\>$ \\[1ex]
    \smaller[1]
    \begin{tabular}{|rl|}
      \hline
      \field{dec} & $\{\}$ \\
      \field{val} & $2$ \\
      \field{sttq} & $[]$ \\
      \field{effq} & $[]$ \\
      \field{child} & $\<\lfun{\_}{\texttt{setS f}}, \_\>$ \\
      \hline
    \end{tabular}
  \end{minipage}
  $\overset{\Rule{StepCheck}}{\smallstep}$
  \begin{minipage}[t]{0.2\linewidth}
    \centering
    $\<p_0, [p_0 \mapsto \pi_4], \delta, \eloopmode\>$ \\[1ex]
    \smaller[1]
    \begin{tabular}{|rl|}
      \hline
      \field{dec} & $\{\}$ \\
      \field{val} & $2$ \\
      \field{sttq} & $[]$ \\
      \field{effq} & $[]$ \\
      \field{child} & $\<\lfun{\_}{\texttt{setS f}}, \_\>$ \\
      \hline
    \end{tabular}
  \end{minipage}
\end{center}

Let's follow the execution of component \reacttt{Demo} step by step:
\begin{enumerate}[start=0]
  \item The main expression is |Demo 0|, and the definition table~$\delta$ contains a single entry of \reacttt{Demo}.

  \item |Demo 0| evaluates into a component spec, which is initialized (\Rule{StepInit}).
    During initialization, the component body is evaluated in the \phase{Init} phase (\Rule{EvalMult}):
    \begin{enumerate}
      \item \reacttt{useState} initializes state variable~\reacttt{s} to~0 and setter function \reacttt{setS} (\Rule{SttBind}).
      \item Since $\text{\reacttt{s}} = 0$, a top-level call to \reacttt{setS} is made and \dec{Check} decision is on (\Rule{AppSetComp}).
      \item \reacttt{useEffect} queues the Effect body (\Rule{Eff}).
      \item Since $\text{\reacttt{s}} \le 1$, a unit is returned.
    \end{enumerate}
    Since \dec{Check} is on, \reacttt{Demo}'s body is re-evaluated in the \phase{Succ} phase with an empty decision and an empty effect queue (\Rule{EvalOnce}):
    \begin{enumerate}
      \item \reacttt{useState} processes the queued update, changing state~\reacttt{s} to~1 (\Rule{SttReBind}).
        Since the state is different from the previous~|0|, \dec{Effect} decision is made.
      \item The rest is similar to the above, so we summarize: the top-level call to \reacttt{setS} is avoided, the Effect body is queued (\Rule{Eff}), and still $\text{\reacttt{s}} = 1 \le 1$ and hence a unit is returned.
    \end{enumerate}

  \item Now we are in rendered mode~$\rendermode$, and we commit Effects (\Rule{StepEffect}).
    \begin{enumerate}
      \item The view decided to run \dec{Effect}s, so the queued Effect is evaluated in the \phase{Normal} phase (\Rule{CommitEffsPath}). Note that there is no child view's Effect to run (\Rule{CommitEffsConst}).
      \item In the Effect, $\text{\reacttt{s}} = 1$ and hence \reacttt{setS} is called.
        \dec{Check} decision is added and the state update closure is queued (\Rule{AppSetNormal}).
      \item \dec{Effect} decision is removed and the decision set is now $\{\dec{Check}\}$ (\Rule{CommitEffsPath}).
    \end{enumerate}

  \item Now we are in check mode~$\checkmode$, and we check for a re-render (\Rule{StepCheck}).
    \begin{enumerate}
      \item The component body evaluates again (\Rule{EvalOnce}): the state update to~2 turns on \dec{Effect} (\Rule{SttReBind}), the Effect is queued (\Rule{Eff}), and the button event handler is returned.
      \item The previous child~$\<\>$ is reconciled with the closure view spec~$\<\lfun{\_}{\texttt{setS f}}, \_\>$ (\Rule{Check\-Effect}).
      \item The closure view spec is initialized as it is of different type with~$\<\>$ (\Rule{ReconcileOther}).
    \end{enumerate}

  \item We are again in rendered mode~$\rendermode$, so we commit Effects (\Rule{StepEffect}).
    \begin{enumerate}
      \item Again, the view has $\dec{Effect}$ on and we commit the Effect (\Rule{CommitEffsPath}), but \reacttt{setS} is not called this time.
        Then we turn off $\dec{Effect}$, leaving the decision set empty (\Rule{CommitEffsPath}).
    \end{enumerate}

  \item We are in check mode~$\checkmode$ to check for a re-render (\Rule{StepCheck}), but this time the view is idle and we are done (\Rule{CheckIdle}).
\end{enumerate}

\section{The \Rtrace{} Interpreter and the Visualizer}\label{sec:impl}
\begin{figure}[t]
  \centering
  \includegraphics[width=\textwidth]{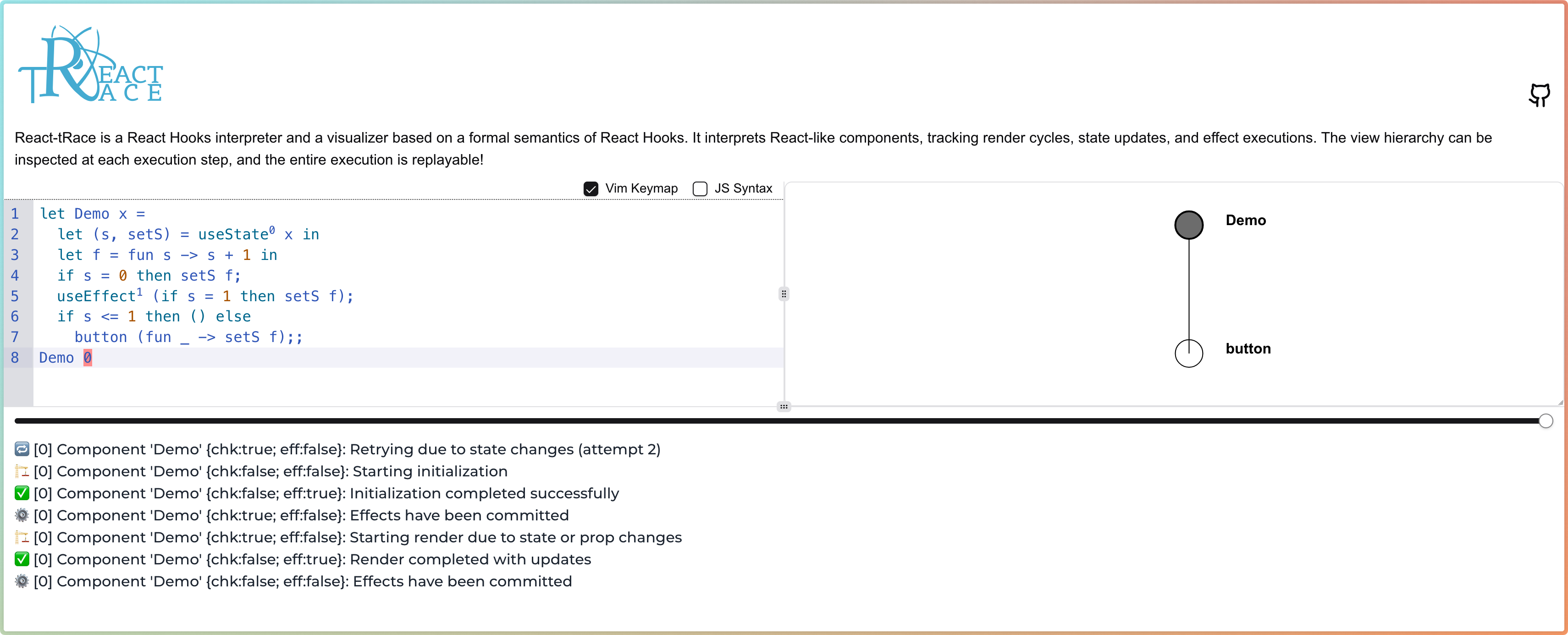}
  \Description{A screenshot of the \Rtrace{} interpreter and visualizer web UI is shown.}
  \caption{The \Rtrace{} interpreter and visualizer interface, showing the illustrative example from~\cref{subsec:example}.}\label{fig:vis}
\end{figure}

We have implemented a definitional interpreter and a visualization tool (\cref{fig:vis}) to help developers understand the behavior of React Hooks, based on the semantics of React Hooks~(\cref{sec:sem}).

The visualizer enables React programmers to examine their programs in an interactive manner:
\begin{itemize}
  \item The view hierarchy can be inspected by visualizing the tree memory.
  \item The execution of the program is explained, automating the step-by-step reasoning in~\cref{subsec:example}.
  \item The entire execution is replayable using a slider, allowing thorough examination of re-rendering bugs described in~\cref{sec:pit}.
\end{itemize}

While our visualizer is fully functional and readily usable, we are actively developing additional features to enhance its capabilities.
Planned or in-progress enhancements include more comprehensive state and trace visualization, experimental JS translation to bridge the gap between real-world React applications, and a direct preview of the rendered UI.

\paragraph{Implementation} The \Rtrace{} interpreter is a faithful implementation of our semantics written in OCaml~5, with the visualization front-end implemented using React with ReScript (\href{https://rescript-lang.org/}{rescript-lang.org}) and TypeScript (\href{https://www.typescriptlang.org/}{typescriptlang.org}).
We used the Js\_of\_ocaml compiler \citep{VouBal14Bytecode} to emit JS code that interfaces with our front-end.


The source code is available at \href{https://github.com/Zeta611/react-trace}{\gh{Zeta611/react-trace}}, and the visualizer interface can be accessed online at \href{https://react-trace.vercel.app/}{\nolinkurl{react-trace.vercel.app}}.

\section{Conformance with React}\label{sec:rules}
\Rtrace{} is a faithful model of Hooks backed by both theoretical and empirical evaluation:
\begin{enumerate}
  \item \textbf{Conformance according to the official documentation:}
    Although React does not come with a formal definition, we can extract key properties of React based on the documentation.
    We show that \Rtrace{} respects the key properties of React as documented in~\cref{subsec:key}.
  \item \textbf{Direct testing against the implementation:}
    We present the results of empirical tests comparing \Rtrace{} against (multiple releases of) React's runtime behavior across a range of scenarios, covering all evaluation rules, in~\cref{subsec:tests}.
\end{enumerate}

\subsection{Conformance Theorems on React}\label{subsec:key}
The React documentation provides an informal explanation of the behavioral properties of React applications.  
Most of these properties are explicitly captured in our semantics.  
To demonstrate that our semantics align with the behaviors described in the documentation, we prove two key properties in~\cref{subsubsec:key}:
\begin{description}
  \item[\Cref{thm:reeval}:] When a setter function is called during rendering, React immediately attempts to re-evaluate the component body with the updated state \citep{Met25UseState}.
  \item[\Cref{thm:eff}:] A view's Effects are executed after re-rendering triggered by a state update from itself or one of its ancestors \citep{Met25Render}.
\end{description} 

While there are slight differences between our semantics and the React runtime due to an optimization in React, this does not affect reasoning about most React applications.  
We have formalized this optimization and proved that it preserves output behavior (\cref{thm:simtrans},~\cref{subsubsec:opt}), provided that applications adhere to the constraints specified by React.

\subsubsection{Key Properties of React}\label{subsubsec:key}
The first key property states that when a setter function is called during rendering, React immediately attempts to re-evaluate the component body with the updated state.
The pitfall discussed in~\cref{subsubsec:topset} is due to this property.
This behavior is modeled by allowing multiple evaluations of the component body to occur within a retrying evaluation, which happens iff a setter is applied during the first evaluation.

\begin{thm}[Re-Evaluation by Calling Setter]\label{thm:reeval}
The derivation of \(\Step*[\pi, \sigma]{e}{v', \pi'', \omega'}\) includes multiple evaluations of \(e\)
iff \Rule{AppSetComp} appears in the derivation of the first evaluation:
\[ \Step[\pi\{\fv{dec}{\pi.\field{dec} \setminus \{\dec{Check}\}}, \fv{effq}{[]}\}, \sigma]{e}{v, \pi', \omega} \]
\end{thm}

\begin{proof}
Multiple evaluations of \(e\) occur in a retrying evaluation iff the resulting view~\(\pi'\) from the first evaluation includes the \dec{Check} decision (\Rule{EvalMult}).
Thus, it suffices to show that \(\pi'.\field{dec}\) contains \dec{Check} iff \Rule{AppSetComp} appears in the derivation of the first evaluation.

This follows directly from the semantics: \Rule{AppSetComp} is the only rule that introduces \dec{Check} during evaluation (when \(\phi \in \{\phase{Init}, \phase{Succ}\}\)).
Moreover, once added, no rule removes \dec{Check} from the view's decision.
Therefore, \(\pi'\) includes \dec{Check} iff a setter is applied during the first evaluation.
\end{proof}

The second key property states that Effects are executed when queued state updates modify the state values of a view or one of its ancestors (case~1 of \cref{thm:eff}).
Notably, Effects may still be executed even if the final state values remain unchanged, provided state setters are invoked during the evaluation of the component body (case~2 of \cref{thm:eff}).

\begin{thm}[Effect Execution Condition]\label{thm:eff}
If $\<t,m,\omega,\delta,\checkmode\> \smallstep \<t,m',\omega',\delta,\mu\>$,
then for all paths~$p$ reachable from the root in both~$m$ and~$m'$,
\Rule{CommitEffsPath} with~$p$ appears in the derivation of the next transition~$\<t,m',\omega',\delta,\mu\> \smallstep \<t,m'',\omega'',\delta,\mu'\>$
iff there exists an ancestor~$p'$ of~$p$ such that either
\begin{enumerate}
  \item for some $\ell \in \domain(m'[p'].\field{sttst})$, we have $m[p'].\field{sttst}[\ell].\field{val} \ne m'[p'].\field{sttst}[\ell].\field{val}$; or
  \item the derivation of $\<t,m,\omega,\delta,\checkmode\> \smallstep \<t,m',\omega',\delta,\mu\>$ includes \Rule{AppSetComp} with path~$p'$.
\end{enumerate}
\end{thm}

\begin{proof}
We first show only the views in rendered mode~\(\rendermode\) may carry the \dec{Effect} decision. This follows from the operational rules governing state transitions:
\begin{itemize}
  \item \Rule{StepInit} transitions the initial state to rendered mode~\(\rendermode\).
  \item \Rule{StepEffect} clears all \dec{Effect} decisions from descendants of~$t$ via \(\sem{commitEffs}\).
  \item \Rule{StepCheck} ensures \dec{Effect} is added only when transitioning into rendered mode~\(\rendermode\), by induction on \(\sem{check}\).
  \item \Rule{StepEvent} does not introduce \dec{Effect}.
\end{itemize}

The \Rule{CommitEffsPath} rule with path~$p$ appears in the derivation of the next transition iff the \Rule{StepEffect} rule is applied (i.e., \(\mu = \rendermode\)) and \dec{Effect} is included in $m'[p].\field{dec}$.
Among the views reachable in~$m'$, only those updated by \(\sem{check}\) during the first transition may carry the \dec{Effect} decision.
More precisely, \dec{Effect} is added to~$p$ iff~$p$ is a descendant of some~$p'$ such that \(\sem{check}(p')\) is derived via \Rule{CheckEffect}.
This holds iff the resulting view~\(\pi'\) from the retrying evaluation of~$p'$'s body satisfies \(\dec{Effect} \in \pi'.\field{dec}\).

We now show that \dec{Effect} is added to such a view after the retrying evaluation iff either (a)~the final state differs from the initial state, or (b)~setters are applied during the evaluation.

Case~(a) is immediate: state updates occur only via \Rule{SttReBind}, which adds \dec{Effect} when the new value differs from the old.

In case~(b), applying a setter adds \dec{Check} and triggers re-evaluation via \Rule{EvalMult}.
Resolving this re-evaluation loop requires the subsequent evaluation to yield a different result, which necessitates a different binding via \Rule{SttReBind}, thereby introducing \dec{Effect}.
Hence, any terminating derivation involving \Rule{AppSetComp} includes \dec{Effect} in the resulting decision.

Conversely, suppose \dec{Effect} is added, but the final state values are equal to the initial ones.
Then at least one state must have changed and reverted---that is, some label~$\ell$ was updated from~$v$ to~$v'$, and then back to~$v$---implying multiple applications of \Rule{SttReBind} with the same label.
This only occurs when \Rule{AppSetComp} triggers multiple evaluations of the body.
Therefore, \dec{Effect} is added iff either some state value differs from its original value, or \Rule{AppSetComp} appears in the derivation.
\end{proof}

\subsubsection{Optimization in React}\label{subsubsec:opt}
There is a subtle discrepancy between our semantics and the React runtime due to an optimization:
When a state update callback returns the same value as the current state, React skips re-evaluating the component body.

\begin{wrapfigure}[7]{l}{0.377\linewidth}
  \vspace*{-\intextsep}
  \begin{reactcode}
let Counter _ =
  print 0;
  let (s, setS) = useState 1 in
  [s, button (fun _ ->
   setS (fun s -> (print 1; s));
   setS (fun s -> (print 2; s+1));
   setS (fun s -> (print 3; s+1)))];;
Counter ()
  \end{reactcode}
\end{wrapfigure}
\noindent In our semantics, clicking the button prints~\verb+0\0\1\2\3+.
The first~\verb+0+ is printed during the initial render.
When the button is clicked after the render, the event handler queues the state updates without printing anything.
Then, during the second render triggered by the state updates, another~\verb+0+ is printed.
Finally, as the updates are applied,~\verb+1\2\3+ are printed.

In contrast, React prints~\verb+0\1\2\0\3+ due to the optimization.
The first~\verb+0+ is printed during the initial render.
Upon clicking the button, the first two updates are evaluated immediately:~\verb+1+ and~\verb+2+ are printed as React checks whether the state has changed.
The second update returns a new state, triggering a re-render.
The third update is skipped at this point because a re-render is already scheduled.
During the re-render, the second~\verb+0+ is printed, followed by~\verb+3+ from the queued third update.

We could have included this optimization for completeness, but we chose not to in order to maintain conciseness.
However, this omission does not compromise correctness as long as applications conform to the constraints specified in the React documentation.

Indeed, we show that the optimization preserves program behavior in \cref{thm:simtrans} (under certain conditions).

The optimization can be understood as partially applying queued state updates without re-executing the component body.
Assuming these updates are pure (\cref{def:purity}) as required by the React documentation, this corresponds to a form of partial normalization,
in which only a prefix of the pure updates is applied.
We formalize the relationship between the original and optimized tree memory using the notion of \emph{similarity} ($m \approx_t m'$; \cref{def:similarity}),
which is defined in terms of \emph{normalization} (\cref{def:normalization})---a process of applying pure updates in advance.

\begin{defn}[Purity]\label{def:purity}
  A closure~$\<\lfun x e, \sigma\>$ is \emph{pure} iff
  the application of any value~$v$ does not cause any side effects.
  That is, for all~$\Sigma$ and~$v$, we have~$\Step[\Sigma,\sigma[x\mapsto v]]{e}{\_,\Sigma, []}$
\end{defn}

\begin{defn}[Normalization]\label{def:normalization}
  A \emph{normalization} of a state store entry is defined as the result of applying its pure prefix of state updates, with its resulting decision collected.
  Let~$l$ be the length of such a prefix.
  Then
  \begin{multline*}
    \sem{normalize}\bigl(\bigl\{\fv{val}{v_0},\fv{sttq}{\bigl[\Overline{\<\lfun{x_i}{e_i},\sigma_i\>}\bigr]_{i=1}^n}\bigr\}\bigr)
    = \bigl\<\bigl\{ \fv{val}{v_l},
         \fv{sttq}{\bigl[\Overline{\<\lfun{x_i}{e_i},\sigma_i\>}\bigr]_{i=l+1}^n} \bigr\}, d\bigr\> \\
    \text{where }\bigl(
      \Step[\pi,\sigma_i[x_i \mapsto v_{i-1}]]{e_i}{v_i,\pi, []}
    \bigr)_{i=1}^{l}\\
    \text{and }d =
      \scondp{l \ne n \scondq \{\dec{Check}\}
      \scondcolon v_0 \nequiv v_l \scondq \{\dec{Check},\dec{Effect}\}
      \scondcolon \{\}}
  \end{multline*}

  We extend this notion to a state store~$\rho$ and a view~$\pi$, i.e., $\sem{normalize}(\rho)$ normalizing all of its entries, and~$\sem{normalize}(\pi)$ normalizing its state store:
  \begin{alignat*}{2}
    \sem{normalize}(\rho) &= \Bigl\<
        [\Overline{\ell \mapsto r_\ell}]_{\ell \in \domain \rho},
        \smashoperator{\bigcup_{\ell \in \domain \rho}} d_\ell
    \Bigr\> && \text{ where } \sem{normalize}(\rho[\ell]) = \<r_\ell,d_\ell\>, \\
    \sem{normalize}(\pi) &= 
      \pi \bigl\{ \fv{sttst}{\rho'},\fv{dec}{\pi.\field{dec} \setminus \{\dec{Check}\} \cup d} \bigr\} && \text{ where } \<\rho',d\> = \sem{normalize}(\pi.\field{sttst}).
    \qedhere 
  \end{alignat*}
\end{defn}

\begin{defn}[Similarity]\label{def:similarity}
  Two views are \emph{similar} iff they are equal under normalization:
  \[
    \pi \approx \pi'
    \triangleiff
    \sem{normalize}(\pi) = \sem{normalize}(\pi')
  \]

  We extend this notion to memories, i.e., $m$~and~$m'$ are \emph{$t$-similar} iff the descendants of~$t$ in $m$~and~$m'$ are all similar and the rest of them are equal:
  \[
    m \approx_t m'
    \triangleiff
    {}\land \begin{lgathered}
      \forall p \in \sem{reachable}(m, t),\ m[p] \approx m'[p] \\
      \forall p \notin \sem{reachable}(m, t),\ m[p] = m'[p] \\
    \end{lgathered}
  \]
  where $\sem{reachable}(m, t)$ is the set of all paths in~$m$ that is reachable from the tree~$t$.
\end{defn}

We now state that evaluating a view similar to the original yields the same result (\cref{lem:simevalbody}), provided the view is \emph{valid} (\cref{def:validity}).

\begin{defn}[Validity]\label{def:validity}
  A view is \emph{valid} if it has only the states that are present in its component body.
  More precisely,~$\pi$ is \emph{valid under~$\delta$} iff the domain of $\pi$'s state store is exactly the set of state labels of~$e$, where~$\pi.\field{spec} = \<C, v\>$ and~$\delta[C] = \<\lfun{x}{e}, \sigma\>$.
  We extend this notion to tree memory: A tree memory~$m$ is valid under~$\delta$ iff its views are all valid under~$\delta$.
\end{defn}

Note that all views encountered during execution are valid.
This follows from a syntactic restriction on component bodies: Hooks must appear only at the top level.
This ensures that each \reacttt{useState} call in a component body is executed exactly once during initialization,
so all views produced by~$\sem{init}$ are valid.

\begin{restatable}[Similar Evaluations of Component Body]{lem}{simevallem}\label{lem:simevalbody}
  Evaluating the component body of similar views produces the same value, view, and output buffer.
  That is, for $\pi$ valid under $\delta$ where $\pi.\field{spec} = \<C,v\>$ and $\delta[C] = \<\lambda x.e,\sigma\>$,
  if $\Step*<\phase{Succ}>[\pi,\sigma[x\mapsto v]]{e}{v',\pi', \omega}$,
  we have $\Step*<\phase{Succ}>[\hat \pi,\sigma[x\mapsto v]]{e}{v',\pi', \omega}$ for all~$\hat\pi \approx \pi$.
\end{restatable}

Building on \cref{lem:simevalbody}, we conclude that the optimization---which replaces some views in memory with similar ones---preserves program behavior.
Note that the output buffer~$\omega$ may differ if component bodies perform printing,
since the optimization may omit their evaluation entirely.
Nevertheless, the resulting memory remains identical, regardless of whether the optimization is applied.

\begin{restatable}[Similar Transitions]{thm}{simtransthm}\label{thm:simtrans}
  If a program transitions to a state in check mode~$\checkmode$ during execution, replacing it with a similar state results in the same final state as the original transition.
  That is, if $\<e,\delta\> \smallstep^* \<t,m,\omega,\delta,\checkmode\> \smallstep \<t,m',\omega',\delta,\mu\>$, then for any $\hat m$ such that $m \approx_t \hat m$, we have $\<t,\hat m,\omega,\delta,\checkmode\> \smallstep \<t,m',\omega'',\delta,\mu\>$.
  We also have~$\omega' = \omega''$ when the component bodies do not print.
\end{restatable}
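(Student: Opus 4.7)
The plan is to invoke \Rule{StepCheck} to reduce the theorem to a strengthened lemma about $\sem{check}$: for any $\hat m \approx_t m$ with $m$ valid under $\delta$, $\hat m, \delta \vdash \sem{check}(t) = \<\mu, m'\>$. I would first establish, as a routine invariant by induction on the length of the transition sequence $\<e,\delta\> \smallstep^* \<t,m,\delta,\checkmode\>$, that every memory encountered is valid under $\delta$, so that the validity hypothesis of \cref{lem:simevalbody} is always available. The strengthened lemma itself proceeds by structural induction on $t$ with an inner case analysis on the derivation of $\sem{check}(t)$.

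The base cases $t = k$ and $t = cl$ are immediate, since no paths are reachable from $t$ and $m \approx_t \hat m$ degenerates to $m = \hat m$. The array case threads the induction hypothesis through each child, using determinism of $\sqcup$ and that successive memory updates preserve $t$-similarity for the remaining children. The principal case is $t = p$, where I case-split on which of \Rule{CheckIdle}, \Rule{CheckNoEffect}, \Rule{CheckEffect} applies on $m$ and, independently, on $\hat m$. Whenever the same rule fires on both sides, \cref{lem:simevalbody} gives that the retrying body evaluation on $\pi$ and on $\hat\pi \approx \pi$ produces the exact same resulting view $\pi'$ and view spec $s$; the subsequent child recursion then uses $\pi.\field{child} = \hat\pi.\field{child}$, which holds because $\sem{normalize}$ modifies only \field{sttst} and \field{dec}, so equal normal forms force equal children. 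The \Rule{CheckEffect} branch additionally relies on an analogous similarity lemma for $\sem{reconcile}$, which I would prove as a separate side lemma by induction on its derivation, reusing \cref{lem:simevalbody} and the determinism of $\sem{init}$.

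The main obstacle is the asymmetric subcase in which $\dec{Check}$ belongs to one of $\pi.\field{dec}$ and $\hat\pi.\field{dec}$ but not the other, so that one memory triggers re-evaluation while the other takes \Rule{CheckIdle}. I would close this gap by invoking \cref{lem:simevalbody} with a witness constructed from $\sem{normalize}(\pi)$ and verifying that the retrying evaluation on the non-idle side re-derives precisely the view that the idle side already carries. This is where the purity hypothesis baked into the definition of normalization is consumed: without it, the impure tail of the state-update queue would introduce additional side effects on the re-evaluating side that the idle side could never account for. Once this case is resolved, the induction closes and the theorem follows by applying the strengthened lemma to the single $\sem{check}$ invocation produced by \Rule{StepCheck}.
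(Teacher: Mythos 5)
Your overall route is the paper's: the theorem is discharged by a similarity lemma for $\sem{check}$ (the paper's \cref{lem:simvisit}), proved by induction on the derivation, using \cref{lem:simevalbody} when the same rule fires on both sides and a separate similarity lemma for $\sem{reconcile}$ (the paper's \cref{lem:reconsim}) in the \Rule{CheckEffect} branch. The gap is in the invariants you carry into the strengthened lemma. You thread only \emph{validity} through the transition sequence, but validity is not enough to close the asymmetric subcase where $\dec{Check}$ holds on one side and not the other. The paper additionally establishes, as preserved invariants of the transition system, that every memory reached in $\checkmode$ is $t$-\emph{semi-stable}, $t$-\emph{coherent}, and contains no view with the $\dec{Effect}$ decision set --- and all three are consumed exactly where you say the difficulty lies. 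Coherence ($\dec{Check} \in \pi.\field{dec}$ iff some state queue is non-empty) is what guarantees that the idle view's queues are empty, so that the view the idle side ``already carries'' really is its own normal form; without it the re-evaluating side flushes queues that the idle side never touches and the resulting memories disagree. Semi-stability is what makes re-evaluation of the normalized (empty-queue) view a fixed point (\cref{lem:stabnorm}), so the non-idle side converges to that same view; purity of the queued updates alone does not give you this, since it constrains the queue contents but says nothing about whether the component body itself is idempotent on the normalized view. You attribute the closure of this case to ``the purity hypothesis baked into the definition of normalization,'' which is the wrong ingredient: purity makes $\sem{normalize}$ consume the whole queue, but idempotence of body evaluation is a separate property that must be proved preserved along the trace (the paper's Stability of Multi-Evaluation and Preservation of Semi-Stability).

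A second, related omission: you must also show that the returned mode $\mu$ agrees across the two derivations in the asymmetric case, i.e.\ that the re-evaluating side lands in \Rule{CheckNoEffect} rather than \Rule{CheckEffect}. This follows from $\dec{Check} \notin \sem{normalize}(\pi).\field{dec}$ together with the no-pre-existing-$\dec{Effect}$ invariant, since then the normalized decision set excludes $\dec{Effect}$ and \cref{lem:stabnorm} forces the re-evaluation to produce exactly that decision set. Your proposal is silent on this, yet if it failed, one side would return $\rendermode$ and reconcile the child while the other returned the child's mode unchanged, and the theorem's conclusion $\<t,\hat m,\delta,\checkmode\> \smallstep \<t,m',\delta,\mu\>$ would be false. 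Once you add semi-stability, coherence, and the effect-free-in-$\checkmode$ invariant to your induction (each with its own preservation argument over \Rule{StepInit}, \Rule{StepEffect}, \Rule{StepCheck}, and \Rule{StepEvent}), your argument coincides with the paper's.
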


Full proofs are provided in~\S B.

\subsection{Conformance Test Suite}\label{subsec:tests}
Our test suite, which covers all~44 evaluation rules\footnote{See~\S A.2 for the complete set of rules.} comprising the operational semantics~(\cref{subsec:sem}), contains 38~tests that cover 18~scenarios shown in \cref{tab:test} that compare the behaviors of programs under \Rtrace{} against React's behavior.
We have constructed the tests ourselves, testing the \Rtrace{} interpreter (\cref{sec:impl}) as well.

\begin{table}[t]
  \caption{Empirical validation of \Rtrace{} against React behavior.}\label{tab:test}
  \centering\smaller[2]
  \begin{tabular}{rlrc}
    \toprule
        & \textbf{Scenarios} & \textbf{Tests \#} & \textbf{\Rtrace{}} \\
    \midrule
     S1 & No re-render w/o a setter call & 6 & \checkmark \\
     S2 & Retries ($0<n<25$) w/ setter call during body eval & 4 & \checkmark \\
     S3 & Infinite retries ($n\ge25$) w/ setter call during body eval & 1 & \checkmark \\
     S4 & No re-render w/o Effects w/ setter call during body eval & 1 & \checkmark \\
     S5 & No re-render w/ Effect w/o setter call & 2 & \checkmark \\
     S6 & No re-render w/ Effect w/ id setter call & 1 & \checkmark \\
     S7 & No re-render w/ Effect w/ setter calls composing to id & 2 & \checkmark \\
     S8 & Re-renders ($0<n<100$) w/ Effect w/ setter call & 16 & \checkmark \\
     S9 & Infinite re-renders ($n\ge100$) w/ Effect w/ diverging setter call & 2 & \checkmark \\
    S10 & Re-render w/ child updating parent during Effect & 2 & \checkmark \\
    S11 & Re-render w/ sibling updating another during Effect & 1 & \checkmark \\
    S12 & Error w/ child updating parent during body eval & 1 & \checkmark \\
    S13 & Non-trivial reconciliation & 4 & \checkmark \\
    S14 & No re-render w/ direct object update & 1 & \checkmark \\
    S15 & Re-render w/ idle but parent updates & 2 & \checkmark \\
    S16 & User event sequence & 6 & \checkmark \\
    S17 & Re-render w/ setter call from user event & 4 & \checkmark\rlap{${}^\dagger$} \\
    S18 & Recursive view hierarchy & 2 & \checkmark \\
    \bottomrule
    && $38\mathrlap{{}^*}$ & \checkmark \\
    &&&\hfill\llap{\smaller[1]${}^*$Some tests cover multiple scenarios.}\\
    &&&\hfill\llap{\smaller[1]${}^\dagger$React's optimization changes some execution orders.}
  \end{tabular}
\end{table}

It is difficult to unit test a single evaluation rule independently because running a React program inevitably leaves footprints on multiple rules, and we have chosen the test scenarios to exhibit the pitfalls explained in~\cref{sec:pit}, as well as to check the core aspects of React including render counts, side effect ordering, reconciliation, and event handling.
For instance, although one of our test cases\footnote{\texttt{effect\_queue\_gets\_flushed\_on\_retry}} checks whether the effect queue gets flushed on retry, effectively testing rule \Rule{EvalMult}~(\cref{fig:evalmult}), it is simply categorized as S2~and~S8 in \cref{tab:test}.

Notably, we test the boundaries of React that even experienced developers may find confusing.
For instance, S12~in \cref{tab:test} covers attempting to update a parent's state during the evaluation of its child, which correctly produces an error in both implementations.
Other scenarios include unnecessary re-rendering that modifies the view hierarchy~(S8--S11,~\cref{tab:test}; similar to examples in~\cref{subsec:unnec}), constructing complex UI with recursive components~(S18,~\cref{tab:test}), and more.

The tests consist of equivalent components implemented twice---once in \Rtrace{} and once in React---then comparing that they exhibit the same behavior.
For example, to empirically check infinite retries~(S3,~\cref{tab:test}) on the React-side, we install an error boundary \citep{Met25Component} to catch an exception when a component reaches React's hard-coded limit of 25~retries.
On the \Rtrace{}-side, we set the retry threshold to~25 and see whether the component did not stop rendering.
For tests measuring the number of render cycles on the React-side, we indirectly measure the count by counting the prints inside an Effect.

Almost all tests show identical behavior between \Rtrace{} and React's runtime, with one minor difference due to the optimization performed by React as discussed in~\cref{subsubsec:opt}.
We might as well include this optimization in \Rtrace{} by adding a special flag and modifying \Rule{AppSetNormal}, but we deliberately chose not to in order to keep the clarity of our semantics.

\Rtrace{} does not model a specific version of React: We have tested against the latest versions of every major React release since Hooks were introduced in~\href{https://www.npmjs.com/package/react/v/16.8.0}{16.8\,(Feb~2019)}.  
All test cases have been reproduced in versions~\href{https://www.npmjs.com/package/react/v/16.14.0}{16.14.0\,(Oct~2020)}, \href{https://www.npmjs.com/package/react/v/17.0.2}{17.0.2\,(Mar~2021)}, \href{https://www.npmjs.com/package/react/v/18.3.1}{18.3.1\,(Apr~2024)}, and \href{https://www.npmjs.com/package/react/v/19.1.0}{19.1.0\,(Mar~2025)}.

The test suite is available alongside the \Rtrace{} interpreter at \href{https://github.com/Zeta611/react-trace}{\gh{Zeta611/react-trace}}.

\section{Discussion}\label{sec:discussion}
\paragraph{Extensions to Other Hooks}
While our semantics focuses on the two most prevalent Hooks \reacttt{useState} and \reacttt{useEffect}, it's designed to be extensible.
Incorporating additional Hooks builds upon existing machinery for bookkeeping and lifecycle modes.

Hooks are mostly state managing (\reacttt{useState}, \reacttt{useRef}, \reacttt{useMemo}, \reacttt{useContext}) or side effect performing (\reacttt{useEffect}, \reacttt{useLayoutEffect}, \reacttt{useInsertionEffect}), which can be supported by extending view record~$\pi$ and/or inserting modes in render transitions.
Users can also build custom Hooks by combining existing Hooks in a function, which should be \textbf{\texttt{use}}-prefixed.

We sketch how to extend the semantics to other Hooks:
\begin{description}
  \item[$\textbf{\texttt{useRef}}^\ell$] Add $\pi.\field{refst}$ storing refs~$[\Overline{\ell \mapsto v}]$.
    Unlike \reacttt{useState}, mutating a ref does not trigger re-renders, so no \dec{Check} decision is added when mutated.
    In \phase{Init} phase, the ref is initialized; in \phase{Succ} phase, the same ref is returned.
  \item[$\textbf{\texttt{useMemo}}^\ell$] Add $\pi.\field{memost}$ storing values and their dependencies~$[\Overline[\ell]{\ell \mapsto \{\field{val}: v, \field{deps}: [\Overline{v}]\}}]$.
    Similar to \reacttt{useState}, in \phase{Init} the value is computed from the provided function; in \phase{Succ} the value is recomputed only when dependencies differ from the stored ones.
  \item[\textbf{\texttt{useContext}}] Add $\pi.\field{ctxst}$ storing a list of consumed contexts and tree memory~$m$ is traversed upward to find the nearest context provider.
    When a provider's value changes, all consuming views are marked with \dec{Check} to trigger re-renders.
  \item[\textbf{\texttt{useLayoutEffect}}] Add mode~$\rendermode'$ into render transitions (\cref{fig:step}).
    \Rule{StepInit} enters~$\rendermode'$ after the initial render; a new rule \Rule{StepLayoutEffect} processes LayoutEffects determining the transition~$\<t, m, \omega, \delta, \rendermode'\> \smallstep \<t, m', \omega', \delta, \checkmode\>$.
  \item[\textbf{\texttt{useInsertionEffect}}] Similar to \reacttt{useLayoutEffect} but runs before.
    State updates are forbidden in InsertionEffects by the React runtime, hence the pitfalls described in \cref{sec:pit} cannot occur. 
  \item[Custom Hooks] Supported by allowing user function definitions~(\cref{fig:syntax}) and \dom{DefTable}~(\cref{subsec:dom}).
    No additional machinery is required as custom Hooks are compositions of primitive Hooks.
\end{description}

\paragraph{Verifying the React Compiler}
Our formalization of Hooks is particularly timely as the React team is currently developing the React compiler \citep{Met25Compiler}, where various optimizations can be verified using our semantics as in~\cref{subsubsec:opt}.
The compiler optimizes React programs by eliminating unnecessary re-renders through memoization.
However, without a proper formal semantics of Hooks, there is no rigorous way to verify that the compiler preserves program behavior.
Our semantics can thus serve as a foundation for reasoning about their correctness.

\paragraph{Beyond React}
Our work can be extended to accommodate other reactive UI frameworks as well.
There is a plethora of GUI frameworks---React, Preact (\href{https://preactjs.com/}{preactjs.com}), Dioxus (\href{https://dioxuslabs.com/}{dioxuslabs.com}), Solid (\href{https://www.solidjs.com/}{solidjs.com}), Svelte (\href{https://svelte.dev/}{svelte.dev}), and Leptos (\href{https://leptos.dev/}{leptos.dev})---with variations in their reactivity models.

These frameworks can be categorized based on three properties: (a)~whether they re-read component specifications for re-rendering, (b)~how state updates are processed (queued or immediate), and (c)~which reactivity primitives they employ.

Reactive primitives include Hooks (where frameworks schedule update checks), signals (where state updates propagate actively), and compiler-assisted methods (where dependencies are statically resolved).
By parameterizing our semantics with appropriate timing and semantic operators, we capture these variations.
A full comparison of the frameworks is provided in~\S C. 

As a reference to how similar these frameworks can be, compare the Dioxus code on the left (where Rules of Hooks apply as well! \citep{Dio25Hooks}) and the \Rtrace{} code on the right.
\begin{center}
  \begin{minipage}{0.45\linewidth}
    \begin{reactcode}
pub fn Counter() -> Element {
  let mut count = use_signal(|| 0);
  rsx! { button { onclick: move |_|
    count += 1, "{count}" } }
    \end{reactcode}
  \end{minipage}%
  \begin{minipage}{0.45\linewidth}
    \begin{reactcode}
let Counter _ =
  let (count, setCount) = useState 0 in
  [button (fun _ -> 
    setCount(fun count -> count+1)), count]
    \end{reactcode}
  \end{minipage}
\end{center}

\section{Related Work}\label{sec:related}
\paragraph{Research on React}
Our work presents the first formal semantics for React's function components and Hooks, which have become the standard for modern React applications.
Moreover, we deliberately decouple the semantics of React and Hooks from JS, enabling reasoning about Hooks independently from the host language.
This approach contrasts with previous work on classed-based React components by \citet{MadLhoTip20Semantics}, who formalized a core calculus~\Lreact{} upon~\Ljs{} \citep{GuhSafKri10Essence} that captures React's component lifecycle and reconciliation.

While our research aims to faithfully model the actual React runtime, \citet{CriKri24Core} took a more general approach with their document calculus.
Their semantics for general document languages demonstrates how features like document references interact with documents written in a React-like language using a simplified runtime, rather than attempting to capture the full complexity of React's behavior.

\paragraph{React Development Tools}
\Rtrace{} provides a theoretical foundation for building practical tools grounded on the semantics of Hooks, unlike existing approaches limited to runtime detection or syntactic checking.
This semantics-based approach contrasts with tools like ESLint (\href{https://eslint.org}{eslint.org}), a \emph{de facto} standard static analyzer that syntactically checks code to prevent basic errors when writing React components.
Similarly, runtime libraries such as \texttt{why-did-you-render} (\href{https://github.com/welldone-software/why-did-you-render}{\gh{welldone-software/why-did-you-render}}) and React Scan (\href{https://react-scan.com/}{react-scan.com}) detect (re-)renders by monkey patching and inspecting the React library at runtime but lack the formal framework to explain why these behaviors occur.

\paragraph{Functional Reactive GUI Frameworks}
React borrows concepts from functional reactive programming (FRP) \citep{EllHud97Functional} but implements them quite differently.
While React embraces declarative components, it requires explicit state management via \reacttt{useState} and provides escape hatches through \reacttt{useEffect}.
This differs significantly from FRP web frameworks which use signals as reactive primitives.
Notable FRP web frameworks include Flapjax \citep{MeyGuhBasCooGreBroKri09Flapjax}, Ur/Web \citep{Chl15UrWeb,Chl15Optimizing}, and the early versions of Elm \citep{CzaCho13Asynchronous}. 

\paragraph{The Elm Architecture}
The Elm programming language (\href{https://elm-lang.org/}{elm-lang.org}) has moved on from FRP and settled on \emph{The Elm Architecture (TEA)} or the \emph{Model-View-Update (MVU)} pattern, which has been formalized as~\Lmvu{} \citep{Fow20ModelViewUpdateCommunicate}.
In TEA, there is a clear separation between a model and a view, where a message is dispatched from the view to update the model through an update function, creating a unidirectional data flow \citep{Fel20Elm}.
One can follow a similar pattern in React if a state management library like Redux (\href{https://redux.js.org/}{redux.js.org}) is used.

\paragraph{Multi-tier Programming}
In practice, React applications span multiple \emph{tiers}---client, server, and database---requiring developers to manually maintain coherency between them.
Full stack React frameworks such as Next.js (\href{https://nextjs.org/}{nextjs.org}) and Remix (\href{https://remix.run/}{remix.run}) borrow ideas from \emph{tierless} or \emph{multi-tier}~(MT) languages to allow developers to develop both the client- and server-side in a single program.
MT languages like Ur/Web \citep{Chl15UrWeb,Chl15Optimizing} (an ML-like language with whole-program optimization), \textsc{Eliom} \citep{RadVouBal16Eliom,RadPapVouBal16Eliom} (an extension of OCaml focusing on modularity), Hop \citep{SerGalLoi06Hop,BouLuoRezSer12Reasoning} (a dynamically typed Scheme-based framework), and Links \citep{CooLinWadYal07Links} (an ML-like language with an effect system that compiles to SQL and JS) enable higher-level reasoning about the whole system while providing stronger safety guarantees \citep{WeiWirSal21Survey}.

\paragraph{JavaScript and Web Semantics}
Our work extends the rich tradition of formalizing web technologies by providing a formal model of React Hooks, complementing existing work on web semantics.
Numerous formal models of JS have been proposed \citep{GuhSafKri10Essence,MafMitTal08Operational,ParSteAndRos15KJS,PolCarLerPomKri12Tested,FraMakNauWooGar18JaVerT,MadLhoTip17Model}, including the recent work on mechanized extraction of the prose ECMA-262 specification by \citet{ParParAnRyu21JISET}.
WebSpec \citep{VerFarBerTemSquMaf23WebSpec} provides a formal browser model in Rocq (formerly known as Coq) to verify browser security.
Several works \citep{PanTor16Automated,PanErnTatKam19Modular,PanGelErnTatKam18Verifying} have formalized the browser layout algorithm and the CSS semantics using SMT-based encoding to verify web page layouts.

\section{Conclusion and Future Work}\label{sec:conclusion}
We have presented \Rtrace{}, an operational semantics for helping React developers understand the behavior of Hooks.
Render-related UI bugs often puzzle developers due to the peculiar semantics of Hooks.
We have captured the essence of Hooks in \Rtrace{}, which clearly explains subtle pitfalls that occur when multiple Hooks interact with each other.
Moreover, developers can use our \Rtrace-interpreter and visualizer to inspect how their programs render step-by-step, making the opaque render process transparent.

As a foundation for future work, \Rtrace{} opens several research directions.
First, our semantics enables semantic-based static analyzers and type systems that go beyond the syntactic or runtime checks of current tools, catching subtle bugs at compile time.
Second, as React evolves with features like React compiler and server components, \Rtrace{} provides a formal basis for verifying their correctness.
Third, \Rtrace{} can be refined for pedagogical use in building a correct mental model of Hooks.
Finally, integrating \Rtrace{} with existing web semantics can enable end-to-end verification of web programs, from component behavior to final rendering.

\section*{Data-Availability Statement}
The source code and test suite for the \Rtrace{} interpreter and visualizer are available on GitHub at \href{https://github.com/Zeta611/react-trace}{\gh{Zeta611/react-trace}}, archived on Zenodo \citep{LeeAhn25Zenodo}, and hosted online at \href{https://react-trace.vercel.app/}{\nolinkurl{react-trace.vercel.app}}.

\begin{acks}
  We are grateful to Joonhyup Lee, Gyuhyeok Oh, Lauren Minjung Kwon, and Hyeongseo Yoo for valuable comments,
  Will Crichton for early feedback,
  and Sukyoung Ryu for helpful suggestions.
  We thank Susana Ang\'elica Balderas Chiw for the logo in \cref{fig:vis}.
  This work was supported by BK21 FOUR Intelligence Computing (Dept. of Computer Science and Engineering, SNU) funded by National Research Foundation of Korea (NRF) (4199990214639), Greenlabs Co., Ltd. (0536-20220078), and Samsung Electronics Co., Ltd. (0536-20230088).
\end{acks}

\bibliographystyle{ACM-Reference-Format}
\bibliography{references}

\clearpage
\appendix
\section{Full Operational Semantics of \Rtrace}\label{app:sec:full}
\subsection{Semantic Domains}\label{app:sec:dom}
\begin{center}
  \begin{bnf}[rccll]
    v : $\dom{Val}$ ::=
    | $k$ : constant values
    | $cl$ : closure
    | $C$ : component name
    | $cs$ : component spec
    | $[\Overline{s}]$ : view spec list
    | $\<\ell, p\>$ : setter closure
    ;;
    k : $\dom{Const}$ ::= $\<\>$ // $\TT$ // $\FF$ // $n$ : constant values
    ;;
    cl : $\dom{Clos}$ ::=
    | $\clos{x}{e}{\sigma}$ : closure
    ;;
    \sigma : $\dom{Env}$ ::= $[\Overline[\ell]{x \mapsto v}]$ : environment
    ;;
    cs : $\dom{ComSpec}$ ::=
    | $\<C, v\>$ : component spec
    ;;
    s : $\dom{ViewSpec}$ ::= $k$ // $cl$ // $[\Overline{s}]$ // $cs$ : view spec
    ;;
    \delta : $\dom{DefTable}$ ::= $[\Overline[\ell]{C \mapsto \deftabent{x}{e}}]$ : component def table
    ;;
    \ell :in: $\mathbb{N}$ : label
    ;;
    p :in: $\dom{Path}$ : tree path
    ;;
    m : $\dom{TreeMem}$ ::= $[\Overline[\ell]{p \mapsto \pi}]$ : tree memory
    ;;
    \pi : $\dom{View}$ ::= $\view{cs}{\{\Overline{d}\}}{\rho}{q}{t}$ : state environment
    ;;
    d : $\dom{Decision}$ ::= \dec{Check} // \dec{Effect} : decision
    ;;
    t : $\dom{Tree}$ ::= $k$ // $cl$ // $[\Overline{t}]$ // $p$ : tree
    ;;
    \rho : $\dom{SttStore}$ ::= $[\Overline[\ell]{\ell \mapsto \sttstent{v}{q}}]$ : state store
    ;;
    q : $\dom{JobQ}$ ::= $[\Overline[\ell]{cl}]$ : job queue
    ;;
    \Sigma : $\dom{Context}$ ::= $m$ // $\pi$ : evaluation context
    ;;
    \phi : $\dom{Phase}$ ::= \phase{Init} // \phase{Succ} // \phase{Normal} : phase
    ;;
    \mu : $\dom{Mode}$ ::= $\rendermode$ \textrm{(rendered)} // $\checkmode$ \textrm{(check)} // $\eloopmode$  \textrm{(event loop)} : runtime mode
  \end{bnf}
\end{center}

\subsection{Operational Semantics}\label{app:subsec:sem}
\begin{center}
  \hfill\fbox{$\<e, \delta\> \text{ or }\<t, m, \omega, \delta, \mu\> \smallstep \<t, m', \omega', \delta, \mu'\>$}
\begin{mathpar}
\inferrule[StepInit]{
  \Step<\phase{Normal}><->[[], []]{e}{s, [], \omega} \\
  [] \vdash \sem{init}(s) = \<t, m, \omega'\>
}{
  \<e, \delta\> \smallstep \<t, m, \omega \concat \omega', \delta, \rendermode\>
} \and
\inferrule[StepEffect]{
  m \vdash \sem{commitEffs}(t) = \<m', \omega'\>
}{
  \<t, m, \omega, \delta, \rendermode\> \smallstep \<t, m', \omega \concat \omega', \delta, \checkmode\>
} \and
\inferrule[StepCheck]{
  m, \delta \vdash \sem{check}(t) = \<\mu, m', \omega'\> \\
}{
  \<t, m, \omega, \delta, \checkmode\> \smallstep \<t, m', \omega \concat \omega', \delta, \mu\>
} \and
\inferrule[StepEvent]{
  \clos{x}{e}{\sigma} \in \sem{handlers}(m, t) \\
  \Step<\phase{Normal}><->[m, \sigma[x \mapsto \<\>]]{e}{v, m', \omega'} \\
}{
  \<t, m, \omega, \delta, \eloopmode\> \smallstep \<t, m', \omega \concat \omega', \delta, \checkmode\>
}
\end{mathpar}
\end{center}

\begin{center}
  \hfill\fbox{$\sem{handlers}(m, t) = \{\Overline[\ell]{cl}\}$}
  \[
    \sem{handlers}(m, t) = \begin{dcases*}
      \{\} & if $t = k$ \\
      \{ cl \} & if $t = cl$ \\
      \bigcup_{i = 1}^n \sem{handlers}(m, t_i) & if $t = [\Overline{t_i}]_{i = 1}^n$ \\
      \sem{handlers}(m, m[p].\field{children}) & if $t = p$
    \end{dcases*}
  \]
\end{center}

\begin{center}
  \hfill\fbox{$\Step<\phi><p>[\Sigma, \sigma]{e}{v, \Sigma', \omega}$}
  \begin{mathparpagebreakable}
    \inferrule[Unit]{ }{\Step{\texttt{()}}{\<\>, \Sigma}, []} \and
    \inferrule[True]{ }{\Step{\texttt{true}}{\TT, \Sigma}, []} \and
    \inferrule[False]{ }{\Step{\texttt{false}}{\FF, \Sigma}, []} \and
    \inferrule[Int]{ }{\Step{n}{n, \Sigma}, []} \and
    \inferrule[Var]{ }{\Step{x}{\sigma(x), \Sigma}, []} \and
    \inferrule[Bop]{
      \Step[\Sigma, \sigma]{e_1}{n_1, \Sigma_1, \omega_1} \\
      \Step[\Sigma_1, \sigma]{e_2}{n_2, \Sigma_2, \omega_2} \\
      \text{$f_\oplus \in \mathbb{Z} \times \mathbb{Z} \to \mathbb{Z}$}
    }{\Step[\Sigma, \sigma]{e_1 \oplus e_2}{f_\oplus(n_1, n_2), \Sigma_2, \omega_1 \concat \omega_2}} \and
    \inferrule[Cond]{
      \Step[\Sigma, \sigma]{e_1}{b, \Sigma', \omega} \\
      \Step[\Sigma', \sigma]{\scond{b}{e_2}{e_3}}{v, \Sigma'', \omega'}
    }{
      \Step[\Sigma, \sigma]{\cond{e_1}{e_2}{e_3}}{v, \Sigma'', \omega \concat \omega'}
    } \and
    \inferrule[Func]{ }{\Step[\Sigma, \sigma]{\func{x}{e}}{\clos{x}{e}{\sigma}, \Sigma, []}} \and
    \inferrule[Seq]{
      \Step[\Sigma, \sigma]{e_1}{\_, \Sigma_1, \omega_1} \\
      \Step[\Sigma_1, \sigma]{e_2}{v_2, \Sigma_2, \omega_2}
    }{\Step[\Sigma, \sigma]{\seq{e_1}{e_2}}{v_2, \Sigma_2, \omega_1 \concat \omega_2}} \and
    \inferrule[List]{
      \bigl(\Step[\Sigma_{i-1}, \sigma]{e_i}{s_i, \Sigma_i, \omega_i}\bigr)_{i = 1}^n
    }{
      \Step[\Sigma_0, \sigma]{
        [\Overline{e_i}]_{i = 1}^n}{[\Overline{s_i}]_{i = 1}^n, \Sigma_n, {\textstyle\Concat_{i = 1}^n \omega_i}
      }
    } \and
    \inferrule[LetBind]{
      \Step{e_1}{v_1, \Sigma_1, \omega_1} \\
      \Step[\Sigma_1, \sigma[x \mapsto v_1]]{e_2}{v_2, \Sigma_2, \omega_2}
    }{
      \Step{\letbind{x}{e_1}{e_2}}{v_2, \Sigma_2, \omega_1 \concat \omega_2}
    } \and
    \inferrule[AppFunc]{
  \Step{e_1}{\clos{x}{e}{\sigma'}, \Sigma_1, \omega_1} \\
  \Step[\Sigma_1, \sigma]{e_2}{v_2, \Sigma_2, \omega_2} \\
  \Step[\Sigma_2, \sigma'[x \mapsto v_2]]{e}{v', \Sigma', \omega'}
}{
  \Step{\app{e_1}{e_2}}{v', \Sigma', \omega_1 \concat \omega_2 \concat \omega'}
} \and
\inferrule[AppCom]{
  \Step{e_1}{C, \Sigma_1, \omega_1} \\
  \Step[\Sigma_1, \sigma]{e_2}{v, \Sigma_2, \omega_2}
}{
  \Step{\app{e_1}{e_2}}{\<C, v\>, \Sigma_2, \omega_1 \concat \omega_2}
} \and
\inferrule[AppSetComp]{
  \Step[\pi, \sigma]{e_1}{\<\ell, p\>, \pi_1, \omega_1} \\
  \Step[\pi_1, \sigma]{e_2}{cl, \pi_2, \omega_2} \\
  \phi \in \{\phase{Init}, \phase{Succ}\}
}{
  \Step[\pi, \sigma]{\app{e_1}{e_2}}{
    \<\>,
    \pi_2\!\left\{
      \begin{spreadlines}{0pt}
        \begin{lgathered}
          \text{let $\rho = \pi_2.\field{sttst},\ vq = \rho[\ell]$ in} \\
          \begin{aligned}
            \field{dec} &\colon \pi_2.\field{dec} \cup \{\dec{Check}\} \\
            \field{sttst} &\colon \rho[\ell \mapsto vq\{\fv{sttq}{vq.\field{sttq} \concat [cl]}\}]
          \end{aligned}
        \end{lgathered}
      \end{spreadlines}
    \right\},
    \omega_1 \concat \omega_2
  }
} \and
\inferrule[AppSetNormal]{
  \Step<\phase{Normal}><->[m, \sigma]{e_1}{\<\ell, p\>, m_1, \omega_1} \\
  \Step<\phase{Normal}><->[m_1, \sigma]{e_2}{cl, m_2, \omega_2} \\
}{
  \Step<\phase{Normal}><->[m, \sigma]{\app{e_1}{e_2}}{
    \<\>,
    m_2\!\left[
      \begin{spreadlines}{0pt}
        \begin{lgathered}
          \text{let $\pi = m_2[p],\ \rho = \pi.\field{sttst},\ vq = \rho[\ell]$ in} \\
          p \mapsto \pi\!\left\{
            \begin{aligned}
              \field{dec} &\colon \pi.\field{dec} \cup \{\dec{Check}\} \\
              \field{sttst} &\colon \rho[\ell \mapsto vq\{\fv{sttq}{vq.\field{sttq} \concat [cl]}\}]
            \end{aligned}
          \right\}
        \end{lgathered}
      \end{spreadlines}
    \right],
    \omega_1 \concat \omega_2
  }
} \and
\inferrule[SttBind]{
  \Step<\phase{Init}>[\pi, \sigma]{e_1}{v_1, \pi_1, \omega_1} \\
  \Step<\phase{Init}>[{
    \begin{spreadlines}{0pt}
      \pi_1\!\left\{
        \fv{sttst}{
          \pi_1.\field{sttst}\!\left[\ell \mapsto \sttstent*{v_1}{[]}\right]
        }
      \right\}\!,
      \sigma\!\left[
        \begin{aligned}
          x &\mapsto v_1 \\
          x_{\textsf{set}} &\mapsto \<\ell, p\>
        \end{aligned}
      \right]
    \end{spreadlines}
  }]{e_2}{v_2, \pi_2, \omega_2}
}{
  \Step<\phase{Init}>[\pi, \sigma]{\stbind[\ell]{x}{e_1}{e_2}}{v_2, \pi_2, \omega_1 \concat \omega_2}
} \and
\inferrule[SttReBind]{
  \pi_0.\field{sttst}[\ell] = \bigl\{
    \fv{val}{v_0},\:\fv{sttq}{[\Overline{\clos{x_i}{e'_i}{\sigma_i}}]_{i = 1}^n}
  \bigr\} \\
  \bigl(\Step[\pi_{i-1}, \sigma_i[x_i \mapsto v_{i-1}]]{e'_i}{v_i, \pi_i, \omega_i}\bigr)_{i = 1}^n \\
  \Step[{
    \begin{spreadlines}{0pt}
      \pi_n\!\left\{
        \begin{aligned}
          \field{dec} &\colon \pi_n.\field{dec} \cup \scond{v_n \not\equiv v_0 }{\{\dec{Effect}\}}{\emptyset} \\
          \field{sttst} &\colon \pi_n.\field{sttst}[\ell \mapsto \sttstent{v_n}{[]}]
        \end{aligned}
      \right\}\!,
      \sigma\!\left[
        \begin{aligned}
          x &\mapsto v_n \\
          x_{\textsf{set}} &\mapsto \<\ell, p\>
        \end{aligned}
      \right]
    \end{spreadlines}
  }]{e_2}{v, \pi', \omega'}
}{
  \Step<\phase{Succ}>[\pi_0, \sigma]{\stbind{x}{e_1}{e_2}}{v, \pi', ({\textstyle\Concat_{i = 0}^n \omega_i}) \concat \omega'}
} \and
\inferrule[Eff]{
  \phi \in \{\phase{Init}, \phase{Succ}\}
}{
  \Step[\pi, \sigma]{\eff{e}}{
    \begin{spreadlines}{0pt}
      \<\>,
      \pi\{ \fv{effq}{\pi.\field{effq} \concat [\clos{\_}{e}{\sigma}]} \},
      []
    \end{spreadlines}
  }
} \and
\inferrule[Print]{
  \Step{e}{v, \Sigma', \omega}
}{
  \Step{\print{e}}{\<\>, \Sigma', \omega \concat [v]}
}
  \end{mathparpagebreakable}
\end{center}

\begin{center}
  \hfill\fbox{$\Step*<\phi><p>[\pi, \sigma]{e}{s, \pi', \omega}$}
\begin{mathpar}
  \inferrule[EvalOnce]{
    \Step[\pi\{\fv{dec}{\pi.\field{dec}\setminus\{\dec{Check}\}},\fv{effq}{[]}\}, \sigma]{e}{s, \pi', \omega} \\
    \dec{Check} \notin \pi'.\field{dec} \\
    \phi \in \{\phase{Init}, \phase{Succ}\}
  }{
    \Step*[\pi, \sigma]{e}{s, \pi', \omega}
  } \and
  \inferrule[EvalMult]{
    \Step[\pi\{\fv{dec}{\pi.\field{dec}\setminus\{\dec{Check}\}},\fv{effq}{[]}\}, \sigma]{e}{s, \pi', \omega} \\
    \dec{Check} \in \pi'.\field{dec} \\
    \Step*<\phase{Succ}>[\pi', \sigma]{e}{s', \pi'', \omega'} \\
    \phi \in \{\phase{Init}, \phase{Succ}\}
  }{
    \Step*[\pi, \sigma]{e}{s', \pi'', \omega \concat \omega'}
  }
\end{mathpar}
\end{center}

\begin{center}
  \hfill\fbox{$m, \delta \vdash \sem{init}(s) = \<t, m', \omega\>$}
\begin{mathpar}
  \inferrule[InitConst]{ }{m, \delta \vdash \sem{init}(k) = \<k, m, []\>} \and
  \inferrule[InitClos]{ }{m, \delta \vdash \sem{init}(cl) = \<cl, m, []\>} \and
  \inferrule[InitArray]{
    \bigl(m_{i-1}, \delta \vdash \sem{init}(s_i) = \<t_i, m_i, \omega_i\>\bigr)_{i = 1}^n
  }{
    m_0, \delta \vdash \sem{init}([\Overline{s_i}]_{i = 1}^n) = \<[\Overline{t_i}]_{i = 1}^n, m_n, {\textstyle\Concat_{i = 1}^n \omega_i}\>
  } \and
  \inferrule[InitCom]{
    m \vdash p\:\judge{fresh} \\
    \delta[C] = \deftabent{x}{e} \\
    \Step*<\phase{Init}>[
      \view{\<C, v\>}{\emptyset}{[]}{[]}{\<\>},
      [x \mapsto v]
    ]{e}{s, \pi, \omega} \\
    m[p \mapsto \pi], \delta \vdash \sem{init}(s) = \<t, m', \omega'\>
  }{
    m, \delta \vdash \sem{init}(\<C, v\>) = \bigl\<p, m'[p \mapsto \pi\{\fv{dec}{\{\dec{Effect}\}}, \fv{child}{t}\}], \omega \concat \omega'\bigr\>
  }
\end{mathpar}
\end{center}

\begin{center}
  \hfill\fbox{$m, \delta \vdash \sem{check}(t) = \<\mu, m', \omega\>$}
\begin{mathpar}
  \inferrule[CheckConst]{ }{m, \delta \vdash \sem{check}(k) = \<\eloopmode, m, []\>} \and
  \inferrule[CheckClos]{ }{m, \delta \vdash \sem{check}(cl) = \<\eloopmode, m, []\>} \and
  \inferrule[CheckArray]{
    \bigl(m_{i-1}, \delta \vdash \sem{check}(t_i) = \<\mu_i, m_i, \omega_i\>\bigr)_{i = 1}^n
  }{\textstyle
    m_0, \delta \vdash \sem{check}([\Overline{t_i}]_{i = 1}^n) = \<\bigsqcup_{i = 1}^n \mu_i, m_n, {\textstyle\Concat_{i = 1}^n \omega_i}\>
  } \and
  \inferrule[CheckIdle]{
    m[p] = \pi \\
    \dec{Check} \notin \pi.\field{dec} \\
    m, \delta \vdash \sem{check}(\pi.\field{child}) = \<\mu, m', \omega\>
  }{\textstyle
    m, \delta \vdash \sem{check}(p) = \<\mu, m', \omega\>
  } \and
  \inferrule[CheckNoEffect]{
    m[p] = \pi \\
    \dec{Check} \in \pi.\field{dec} \\
    \pi.\field{spec} = \<C, v\> \\
    \delta[C] = \deftabent{x}{e} \\
    \Step*<\phase{Succ}>[
      \pi,
      [x \mapsto v]
    ]{e}{\_, \pi', \omega} \\
    \dec{Effect} \notin \pi'.\field{dec} \\
    m, \delta \vdash \sem{check}(\pi.\field{child}) = \<\mu, m', \omega'\>
  }{\textstyle
    m, \delta \vdash \sem{check}(p) = \<\mu, m'[p \mapsto \pi'], \omega \concat \omega'\>
  } \and
  \inferrule[CheckEffect]{
    m[p] = \pi \\
    \dec{Check} \in \pi.\field{dec} \\
    \pi.\field{spec} = \<C, v\> \\
    \delta[C] = \deftabent{x}{e} \\
    \Step*<\phase{Succ}>[
      \pi,
      [x \mapsto v]
    ]{e}{s, \pi', \omega} \\
    \dec{Effect} \in \pi'.\field{dec} \\
    m, \delta \vdash \sem{reconcile}(\pi.\field{child}, s) = \<t', m', \omega'\> \\
  }{\textstyle
    m, \delta \vdash \sem{check}(p) = \<\rendermode, m'[p \mapsto \pi'\{\fv{child}{t'}\}], \omega \concat \omega'\>
  }
\end{mathpar}
\end{center}

\begin{center}
  \hfill\fbox{$m, \delta \vdash \sem{reconcile}(t, s) = \<t', m', \omega\>$}
\begin{mathpar}
  \inferrule[ReconcileArray]{
    \bigl(m_{i-1}, \delta \vdash \sem{reconcile}(t_i, s_i) = \<t_i', m_i, \omega_i\>\bigr)_{i = 1}^n
  }{
    m_0, \delta
    \vdash \sem{reconcile}([\Overline{t_i}]_{i = 1}^n, [\Overline{s_i}]_{i = 1}^n)
    = \<[\Overline{t_i'}]_{i = 1}^n, m_n, {\textstyle\Concat_{i = 1}^n \omega_i}\>
  } \and
  \inferrule[ReconcileComEffect]{
    m[p] = \pi \\
    \pi.\field{spec} = \<C, \_\> \\
    \delta[C] = \deftabent{x}{e} \\
    \Step*<\phase{Succ}>[
      \pi\{\fv{spec}{\<C, v\>}\},
      [x \mapsto v]
    ]{e}{s, \pi', \omega} \\
    m, \delta \vdash \sem{reconcile}(\pi.\field{child}, s) = \<t', m', \omega'\> \\
  }{
    m, \delta \vdash \sem{reconcile}(p, \<C, v\>) = \bigl\<p, m'[p \mapsto \pi'\{\fv{dec}{\{\dec{Effect}\}}, \fv{child}{t'}\}], \omega \concat \omega'\bigr\>
  } \and
  \inferrule[ReconcileComNew]{
    m[p].\field{spec} = \<C', \_\> \\
    C \ne C' \\
    m, \delta \vdash \sem{init}(\<C, v\>) = \<t', m', \omega\>
  }{
    m, \delta \vdash \sem{reconcile}(p, \<C, v\>) = \<t', m', \omega\>
  } \and
  \inferrule[ReconcileOther]{
    \<t, s\> \neq \<[\Overline{t}], [\Overline{s}]\> \\
    \<t, s\> \neq \<p, \<C, v\>\> \\
    m, \delta \vdash \sem{init}(s) = \<t', m', \omega\>
  }{
    m, \delta \vdash \sem{reconcile}(t, s) = \<t', m', \omega\>
  }
\end{mathpar}
\end{center}

\begin{center}
  \hfill\fbox{$m \vdash \sem{commitEffs}(t) = \<m', \omega\>$}
\begin{mathpar}
  \inferrule[CommitEffsConst]{ }{m \vdash \sem{commitEffs}(k) = \<m, []\>} \and
  \inferrule[CommitEffsClos]{ }{m \vdash \sem{commitEffs}(cl) = \<m, []\>} \and
  \inferrule[CommitEffsArray]{
    \bigl(m_{i-1} \vdash \sem{commitEffs}(t_i) = m_i, \omega_i\bigr)_{i = 1}^n
  }{
    m_0 \vdash \sem{commitEffs}([\Overline{t_i}]_{i = 1}^n) = \<m_n, {\textstyle\Concat_{i = 1}^n \omega_i}\>
  } \and
  \inferrule[CommitEffsPathIdle]{
    \dec{Effect} \notin m[p].\field{dec} \\
    m \vdash \sem{commitEffs}(m[p].\field{child}) = \<m', \omega\> \\
  }{
    m \vdash \sem{commitEffs}(p) = \<m', \omega\>
  } \and
  \inferrule[CommitEffsPath]{
    \dec{Effect} \in m[p].\field{dec} \\
    m[p].\field{effq} = [\Overline{\clos{\_}{e_i}{\sigma_i}}]_{i = 1}^n \\
    m \vdash \sem{commitEffs}(m[p].\field{child}) = \<m_0, \omega_0\> \\
    \bigl(
      \Step<\phase{Normal}><->[m_{i-1}, \sigma_i]{e_i}{v_i, m_i, \omega_i}
    \bigr)_{i = 1}^n
  }{
    m \vdash \sem{commitEffs}(p) = \bigl\<m_n\bigl[p \mapsto m_n[p]\{\fv{dec}{m_n[p].\field{dec} \setminus \{\dec{Effect}\}}\}\bigr], {\textstyle\Concat_{i = 0}^n \omega_i}\bigr\>
  }
\end{mathpar}
\end{center}

\clearpage
\section{Proofs of Theorems}\label{app:sec:proofs}
\begin{defn}[$e$-Equivalent Views]\label{def:equiv}
  Views~$\pi$ and~$\pi'$ are $e$-equivalent when the state entries whose labels appear in~$e$ are the same.
  That is,
  \[
    \pi \equiv_e \pi'
    \triangleiff
    \forall \ell \in \sem{labels}(e),\ \pi.\field{sttst}[\ell] = \pi'.\field{sttst}[\ell]. \qedhere
  \]
\end{defn}

\begin{defn}[$t$-Equivalent Memories]\label{def:memequiv}
  Tree memories~$m$ and~$m'$ are $t$-equivalent iff the descendant views of~$t$ in~$m$ and~$m'$ are the same.
  That is,
  \[
    m \equiv_t m'
    \triangleiff
    \forall p \in \sem{reachable}(m,t), m[p] = m'[p] \qedhere
  \]
\end{defn}

\begin{lem}[Similar Evaluations]\label{lem:simeval}
  Evaluations of~$e$ in~\phase{Succ} phase with similar views as a context produce identical values and output buffers, along with $e$-equivalent views.
  That is, if $\Step<\phase{Succ}>[\pi,\sigma]{e}{v,\pi',\omega}$,
  then for any $\hat\pi \approx \pi$, we have
  $\Step<\phase{Succ}>[\hat \pi,\sigma]{e}{v,\hat \pi',\omega}$ where $\hat\pi' \equiv_e \pi'$.
\end{lem}

\begin{proof}
  Suppose we have $\Step<\phase{Succ}>[\pi,\sigma]{e}{v,\pi',\omega}$ and choose some $\hat\pi \approx \pi$.
  We need to show that $\hat v = v$, $\hat\pi' \equiv_e \pi'$, and $\hat\omega = \omega$, where $\Step<\phase{Succ}>[\hat\pi,\sigma]{e}{\hat v,\hat\pi',\hat\omega}$.

  We prove that $\hat v = v$, $\hat\pi' \approx \pi'$ (before showing $\hat\pi' \equiv_e \pi'$), and $\hat\omega = \omega$ by rule induction.
  We put a~$\hat\cdot$ on the intermediate variables in the derivation of~$\Step<\phase{Succ}>[\hat\pi,\sigma]{e}{\hat v,\hat\pi',\hat\omega}$.
  \begin{description}
    \item[{\normalfont\Rule{Print}}]
      Evaluations of~$e$ produce same values~$\hat v = v$ and output buffers~$\hat \omega = \omega$ with similar views~$\hat\pi' \approx \pi'$ by the IH.
      Appending the same values to the same output buffers results in the same buffers~$\hat\omega \concat [\hat v] = \omega \concat [v]$.
      Resulting values are both~$\<\>$.
      Contexts---views in this case---are unmodified.

    \item[{\normalfont\Rule{AppSetComp}}]
      Evaluations of~$e_1$ and~$e_2$ produce same values~$\<\hat\ell, \hat p\> = \<\ell, p\>$ and $\hat{cl} = cl$, and output buffers~$\hat\omega_1 = \omega_1$ and $\hat\omega_2 = \omega_2$, and similar views~$\hat\pi_1 \approx \pi_1$ and $\hat\pi_2 \approx \pi_2$ by the IH.
      Concatenations of the same output buffers result in the same buffers~$\hat\omega_1 \concat \hat\omega_2 = \omega_1 \concat \omega_2$.
      Appending the same closures~$\hat{cl} = cl$ to the state update queues of similar views~$\hat\pi_2 \approx \pi_2$ produces similar views.
      (The added \dec{Check}s are discarded anyway during normalization.)
      Resulting values are both~$\<\>$.

    \item[{\normalfont\Rule{SttReBind}}]
      Evaluations of the queued~$e_i'$s all produce the intermediate values~$\hat v_i = v_i$ and output buffers~$\hat \omega_i = \omega_i$ that are equivalent and views~$\hat\pi_i \approx \pi_i$ that are similar by the IH.
      Then the modified~$\hat\pi_n$ and~$\pi_n$ each modified with the added \dec{Effect} and the same state~$\hat v_n = v_n$ are still similar.
      This suffices to apply the IH to the evaluation of~$e_2$, resulting in the same values~$\hat v = v$ and output buffers~$\hat\omega' = \omega'$, and similar views~$\hat\pi' \approx \pi'$.
      Thus the resulting values~$\hat v = v$ and the concatenation of the buffers~$(\concat_{i=1}^n \hat \omega_i) \concat \hat\omega' = (\concat_{i=1}^n \omega_i) \concat \omega'$ are equivalent, and the resulting views~$\hat\pi' \approx \pi'$ are similar.
  \end{description}
  The remaining cases are trivial.

  It is easy to show that if~$\pi$ and~$\hat \pi$ are similar and $e$-equivalent for some~$e$, then the results are also $e$-equivalent using rule induction.

  Now we can prove that $\hat\pi' \equiv_e \pi'$, again using rule induction.
  In each case, evaluation of each sub-expression~$e'$ of~$e$ produces $e'$-equivalent views by the IH, and the equivalence is preserved in following evaluations of other sub-expressions.
  Therefore, the resulting views from~$\pi$ and~$\hat \pi$ are $e'$-equivalent for every sub-expression~$e'$, which makes the resulting~$\pi'$ and~$\hat \pi'$ $e$-equivalent.
  Note that nested or dead sub-expressions of~$e$ do not contain state labels~$\ell$, due to the syntactic restriction that Hooks can only appear at the top level of a component body.
  
  The only noteworthy case is \Rule{SttReBind} where a state label exists.
  In this case, the state store entry for~$\ell$ is updated directly to the same value~$v_n$ (as just shown above) in both derivations from~$\pi$ and~$\hat \pi$.
  Therefore, the resulting~$\pi'$ and~$\hat \pi'$ have the same state store entry for~$\ell$.
\end{proof}

\simevallem*

\begin{proof}
  Suppose $\Step*<\phase{Succ}>[\pi,\sigma[x\mapsto v]]{e}{v,\pi',\omega}$ and $\Step*<\phase{Succ}>[\hat \pi,\sigma[x\mapsto v]]{e}{\hat v',\hat \pi', \hat \omega}$, for some~$\pi$ such that~$\pi.\field{spec} = \<C, v\>$ where~$\delta[C] = \<\lfun{x}{e}, \sigma\>$ and~$\hat\pi \approx \pi$.
  By \cref{lem:simeval}, evaluations of~$e$ produce $e$-equivalent views, that is, the states whose labels appear in~$e$ are the same in~$\pi'$ and~$\hat \pi'$.
  Since~$\pi'$ and~$\hat \pi'$ are valid under~$\delta$, the only labels that~$\pi'$ and~$\hat \pi'$ have are those appearing in~$e$.
  Therefore, $\pi' = \hat \pi'$.
  Similarly, $v' = \hat v'$ and $\omega = \hat \omega$.
  Any remaining loops, if present, are evaluated under identical views, trivially producing the same results.
\end{proof}

\begin{defn}[Stability and Semi-Stability]\label{def:stability}
  A view~$\pi$ is \emph{stable} iff the re-evaluation of~$\pi$'s body (as in \Rule{EvalOnce} and \Rule{EvalMult}) produces the same view as $\pi$:
  \begin{multline*}
    \Step[\pi\{\fv{dec}{\pi.\field{dec}\setminus\{\dec{Check}\}},\fv{effq}{[]}\}, \sigma[x\mapsto v]]{e}{\_, \pi, \omega} \\
    \text{where $\pi.\field{spec} = \<C,v\>$ and $\delta[C] = \<\lambda x.e,\sigma\>$.}
  \end{multline*}

  View~$\pi$ is \emph{semi-stable} iff $\pi$ with empty state update queues is stable.

  We extend this notion to tree memory:
  $m$ is~\emph{$t$-stable} iff the descendant views of~$t$ in~$m$ are all stable, and \emph{$t$-semi-stable} iff the descendant views of~$t$ in~$m$ are all semi-stable.
  
  Since component body of a stable view can be re-evaluated in an idempotent way, evaluation of the body can be thought of as ``reading'' the specification of the view.
\end{defn}

\begin{defn}[Coherence]\label{def:coherence}
  A view is \emph{coherent} if its decision is coherent with the status of the state update queue.
  That is, a coherent view~$\pi$ satisfies the predicate
  \[
    \dec{Check} \in \pi.\field{dec}
    \qquad\longleftrightarrow\qquad
    \exists \ell \in \domain \rho,\ \rho[\ell].\field{sttq} \neq []
    \quad \text{where $\rho = \pi.\field{sttst}$.} 
  \]

  We extend this notion to tree memory, i.e., $m$ is \emph{$t$-coherent} iff the descendants of~$t$ in~$m$ are coherent.
\end{defn}

\begin{restatable}[Preservation of Validity]{lem}{validprop}\label{lem:valid}
  If $\<e,\delta\> \smallstep^* \<t,m,\omega,\delta,\mu\>$, then~$m$ is valid under~$\delta$.
\end{restatable}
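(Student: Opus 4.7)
The plan is to proceed by induction on the length~$n$ of the transition sequence $\<e,\delta\> \smallstep^n \<t,m,\delta,\mu\>$. The entire argument rests on a single structural invariant: every view present in~$m$ is either produced by $\sem{init}$ directly, or obtained from an already valid view by operations that preserve the state-store domain. The base case $n=1$ is \Rule{StepInit}, after which $m$ consists of the views installed by $\sem{init}$ while recursively unfolding the view spec~$s$ produced from the main expression. Among the initialization rules, only \Rule{InitCom} touches state stores: it starts with an empty store and then evaluates the body in the \phase{Init} phase, during which every firing of \Rule{SttBind} extends the store by exactly the label~$\ell$ of its |useState|. By the syntactic restriction that Hooks appear only at the top level of component bodies, the set of labels executed in \phase{Init} is exactly the set of |useState| labels of the body, which is the validity condition.

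For the inductive step the crux is a preservation lemma for body re-evaluation: whenever a valid $\pi$ with $\pi.\field{spec}=\<c,v'\>$ and $\delta[c]=\<\lfun{x}{e},\sigma\>$ satisfies $\Step*<\phase{Succ}>[\pi,\sigma[x\mapsto v']]{e}{\_,\pi'}$, the resulting $\pi'$ is also valid and retains the same component name. I would prove this by inspecting how a \phase{Succ}-evaluation can modify the state store: only \Rule{SttReBind} applies to |useState|, and it rebinds the entry at an already-present label~$\ell$ without extending the domain; \Rule{AppSetComp} only augments the decision set and the update queue at an existing label; and \Rule{Eff} does not touch the store at all. Because Hooks sit at the top level, every such label is visited exactly once, in the same syntactic order as during \phase{Init}, so the domain of $\pi'.\field{sttst}$ coincides with that of $\pi.\field{sttst}$.

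With that lemma in hand, each of the remaining transition rules is checked in turn. \Rule{StepEffect} threads $\sem{commitEffs}$ through the tree; Effects are evaluated in \phase{Normal}, where the only way to modify a store is \Rule{AppSetNormal}, which appends to \field{sttq} at an existing label and so preserves the domain. \Rule{StepCheck} unfolds $\sem{check}$, whose nontrivial cases \Rule{CheckNoEffect} and \Rule{CheckEffect} re-evaluate a body (covered by the lemma) and may invoke $\sem{reconcile}$; the latter either preserves the spec and invokes the lemma again (\Rule{ReconcileComEffect}), or dispatches to $\sem{init}$ on a fresh view spec (\Rule{ReconcileComNew}, \Rule{ReconcileOther}), which by the base-case argument produces valid views while leaving all other, unmodified views in~$m$ untouched and hence still valid. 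Finally, \Rule{StepEvent} evaluates a handler in \phase{Normal}, so again only \Rule{AppSetNormal} can touch the state stores.

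The main obstacle I expect is the careful bookkeeping in the preservation lemma: one must argue that the multiset of |useState| labels encountered during a \phase{Succ} re-evaluation is precisely the multiset encountered during the initial \phase{Init} evaluation, regardless of which conditional branch is taken on that particular render. This hinges essentially on the top-level-Hook syntactic restriction, which makes the Hook labels of a body a fixed, control-flow-independent list; formalizing it cleanly will likely require a small auxiliary lemma relating the set of \Rule{SttBind}/\Rule{SttReBind} firings in any derivation to the syntactic Hook labels of~$e$, but no deep semantic reasoning beyond what is already encoded in \cref{fig:eval}.
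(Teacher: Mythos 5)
Your proposal is correct and follows essentially the same route as the paper's proof: both arguments hinge on the observation that \Rule{SttBind} is the only rule that extends a state store, that it fires only in the \phase{Init} phase inside $\sem{init}$ on a fresh view, and that the top-level-Hook restriction guarantees every \texttt{useState} label of the body is added there. Your version merely makes explicit (via the induction on transition length and the \phase{Succ}/\phase{Normal} preservation checks for \Rule{SttReBind}, \Rule{AppSetComp}, and \Rule{AppSetNormal}) what the paper compresses into the remark that no other rule adds states.
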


\begin{proof}
  The only rule that adds states is \Rule{SttBind},
  which appears only in the derivation of $\sem{init}$ where the context being a fresh view.
  Therefore, it is sufficient to check that the evaluation of the component body in $\phase{Init}$ produces a valid view.
  Due to the syntactic restriction on the usage of Hooks, every |useState| Hook is executed during the evaluation, adding its state labeled~$\ell$ to the view.
  This process adds every label in the expression to the view, producing a valid view.
\end{proof}

\begin{lem}[Semi-Stability and Normalized Form]\label{lem:stabnorm}
  Let~$\pi$ be a semi-stable view valid under~$\delta$, and let~$\hat\pi = \sem{normalize}(\pi)$.
  If $\dec{Check} \notin \hat\pi.\field{dec}$, then the evaluation of the body produces a view exactly the same as the normalized form.
  That is, $\Step*<\phase{Succ}>[\pi\{\fv{dec}{\pi.\field{dec} \setminus \{\dec{Check}\}},\fv{effq}{[]}\}, \sigma[x\mapsto v]]{e}{\_,\hat\pi, \omega}$ where~$\pi.\field{spec} = \<C,v\>$ and~$\delta[C] = \<\lfun x e, \sigma\>$.
\end{lem}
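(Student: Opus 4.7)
The plan is to reduce the retrying evaluation of $\pi$'s body to one that semi-stability directly covers, exploiting (a)~purity of every queued state update and (b)~the fact that normalization produces no net change in any state value.

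First, I would unpack the hypothesis $\dec{Check} \notin \sem{normalize}(\pi).\field{dec}$: by the case split in the definition of $\sem{normalize}$ on each state-store entry, for every label~$\ell$ the pure-prefix length~$l$ must equal the full queue length~$n$ and the resulting value $v_l$ must satisfy $v_l \equiv v_0$. I would also show that semi-stability already forces $\dec{Check} \notin \pi.\field{dec}$: otherwise the stability equation for the empty-queue variant $\pi^\circ$ of $\pi$ would require the body to re-introduce $\dec{Check}$, which is only possible via $\Rule{AppSetComp}$; but $\Rule{AppSetComp}$ simultaneously appends a closure to some $\field{sttq}$, contradicting the empty queues of $\pi^\circ$. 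Consequently $\pi\{\fv{dec}{\pi.\field{dec}\setminus\{\dec{Check}\}},\fv{effq}{[]}\}$ simplifies to $\pi\{\fv{effq}{[]}\}$, and $\sem{normalize}(\pi)$ is $\pi$ with every $\field{sttq}$ emptied.

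Next, I would induct on the prefix of top-level $\stbind$'s guaranteed by validity and the Rules of Hooks. Each $\Rule{SttReBind}$ step sees only pure closures, so purity yields $\pi_i = \pi_{i-1}$ throughout its inner update sequence; because $v_n \equiv v_0$, the outgoing view differs from the incoming one solely by emptying the relevant $\field{sttq}$, and no $\dec{Effect}$ is introduced. After the entire prefix, the threaded view is $\pi^\circ\{\fv{effq}{[]}\}$ and every state variable is bound in the local environment to its original value. In parallel, applying semi-stability of $\pi$ yields a reference trace in which each top-level $\Rule{SttReBind}$ runs with an empty queue ($n = 0$) and leaves view and environment untouched, so the reference trace reaches the same intermediate configuration after its prefix.

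Finally, the trailing Hook-free suffix of the body is evaluated from identical views and environments in the two traces, so by determinism of big-step evaluation both traces terminate at the same view---namely $\pi^\circ$, which is exactly $\sem{normalize}(\pi)$. The terminal decision omits $\dec{Check}$ (so $\Rule{EvalOnce}$ suffices to close the retrying judgment), because the reference trace is terminal at $\pi^\circ$ and therefore never fires $\Rule{AppSetComp}$, and the same holds for ours by determinism. The main obstacle will be stating the inductive invariant precisely enough to couple the two traces step by step: after the $i$-th top-level $\Rule{SttReBind}$, the threaded view, the decision set, and all state-variable bindings in our trace must agree with those in the reference trace. Once this coupling is in place, the remainder is a routine appeal to determinism of big-step evaluation.
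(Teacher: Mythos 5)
Your proposal is correct in substance, but it takes a more self-contained route than the paper. The paper's proof is three lines: it observes that $\sem{normalize}(\pi)$ is $\pi$ with decision and queues emptied, hence stable by semi-stability, hence evaluates to itself; it then transfers this back to $\pi$ in one stroke by noting $\pi \approx \sem{normalize}(\pi)$ and invoking \cref{lem:simevalbody} (Similar Evaluations of Component Body). You instead inline the special case of that transfer: an explicit coupling induction over the top-level \Rule{SttReBind} prefix, using purity of every queued update to show each step only empties a queue, and then determinism on the Hook-free suffix. What your route buys is independence from \cref{lem:simevalbody} (and hence from the $e$-equivalence machinery of \cref{lem:simeval}), at the cost of having to state and maintain the coupling invariant yourself; what the paper's route buys is brevity, since the similarity lemma is needed elsewhere anyway. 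Your side observation that semi-stability already forces $\dec{Check} \notin \pi.\field{dec}$ is correct but unnecessary, since the judgment in the statement strips $\dec{Check}$ before evaluating. One point to watch in your final identification: the reference trace from the empty-queue variant $\pi^\circ$ keeps each state at $v_0$, whereas $\sem{normalize}(\pi)$ stores $v_l = v_n$ and the actual trace binds the state variable to $v_n$; your coupling therefore needs $v_n \equiv v_0$ to be literal equality rather than a coarser equivalence. The paper's proof silently makes the same identification (it asserts $\sem{normalize}(\pi)$ differs from $\pi$ only in its decision and queues), so this is a shared gloss rather than a gap specific to your argument, but since you couple against $\pi^\circ$ rather than against $\sem{normalize}(\pi)$ it surfaces more visibly in your write-up and deserves a sentence.
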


\begin{proof}
  The normalized view~$\hat\pi$ is the same as~$\pi$ except that the decision and state update queues are empty---$\dec{Check} \notin \hat\pi.\field{dec}$ implies also $\dec{Effect} \notin \hat\pi.\field{dec}$ from \cref{def:normalization}.
  Thus~$\hat\pi$ is stable as~$\pi$ is semi-stable.
  By \cref{def:stability}, evaluation with~$\hat\pi$ produces~$\hat\pi$ again, which also holds for retrying evaluation.
  Since~$\pi \approx \hat\pi$, by \cref{lem:simevalbody}, $\pi' = \hat\pi$.
\end{proof}

\begin{lem}[Stability of Retrying Evaluation]\label{lem:stab}
  Let~$\pi.\field{spec} = \<C,v\>$ and~$\delta[C] = \<\lambda x.e,\sigma\>$.
  If~$\pi$ is semi-stable and $\Step*[\pi,\sigma[x\mapsto v]]{e}{v',\pi', \omega}$,
  then~$\pi'$ is stable.
\end{lem}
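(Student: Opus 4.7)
The plan is to establish stability of $\pi'$ directly, using the termination premise of the multi-step evaluation together with determinism of the big-step rules on the Hook-free tail of the body. The first observation is that $\pi'$ has empty state-update queues at every label. The $\dec{Check}$ decision is introduced only by \Rule{AppSetComp}, which simultaneously enqueues an update at the corresponding label, while \Rule{SttReBind} always flushes its queue to $[]$ before returning. Since the last body evaluation producing $\pi'$ has $\dec{Check} \notin \pi'.\field{dec}$, no \Rule{AppSetComp} could have fired after the \Rule{SttReBind} phase of that evaluation, so every state-update queue in $\pi'$ is empty.

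To witness stability I re-evaluate the body of $\pi'$ from the configuration $\pi'\{\fv{dec}{\pi'.\field{dec}\setminus\{\dec{Check}\}},\fv{effq}{[]}\}$ and argue that this reconstructs $\pi'$ exactly. Each \Rule{SttReBind} now fires on an empty queue, so the stored state value is unchanged and no $\dec{Effect}$ is added. The Hook-free tail of the body then executes under the same state values that were in scope at the corresponding point of the final body evaluation producing $\pi'$ (the syntactic restriction that Hooks occur only at the top level of component bodies is essential here, since it lets us cleanly split the body into a Hook prefix and a Hook-free tail). By determinism of the big-step rules, the same \Rule{Eff} invocations re-queue the same effect closures in the same order, and no \Rule{AppSetComp} fires, since otherwise $\dec{Check}$ would reappear, contradicting the absence of a setter call in the last step of the multi-step evaluation. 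The resulting view agrees with $\pi'$ field by field.

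The main obstacle is justifying the "same behavior" claim for the \Rule{SttReBind} phase of the replay. Update closures processed by \Rule{SttReBind} are themselves arbitrary expressions that may, in principle, fire \Rule{Eff} or \Rule{AppSetComp} during application, so even a single \Rule{SttReBind} step can have a nontrivial trace. This is where the semi-stability hypothesis on $\pi$ is used: it constrains the update closures reachable along the multi-step so that \cref{lem:simevalbody} can be invoked to transfer the behavior of the original body evaluation onto the replay from $\pi'$. With this transfer in place, the replay coincides step-for-step with the tail of the evaluation that produced $\pi'$, and stability of $\pi'$ follows.
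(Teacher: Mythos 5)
Your route is genuinely different from the paper's. The paper argues through normalization: it observes that \Rule{AppSetComp} is absent from the last body evaluation, concludes that all pending updates are pure so that $\sem{normalize}$ of the view entering that evaluation has empty queues, identifies $\pi'$ with that normalized view via \cref{lem:stabnorm} and \cref{lem:simevalbody}, and then obtains stability of the normalized view directly from the semi-stability hypothesis (which is exactly the statement that the empty-queue version of a view is a fixpoint of body evaluation). You instead try to prove stability of $\pi'$ from scratch by replaying its body and appealing to determinism. Your first two paragraphs are sound and arguably more elementary: the queues of $\pi'$ are indeed empty, a replay's \Rule{SttReBind} steps rebind exactly the values stored in $\pi'$, and the determinism argument for the rest of the body works because no evaluation rule ever reads the $\field{dec}$ field.

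The gap is in your third paragraph, which correctly identifies the remaining obstacle but resolves it with the wrong hypotheses. Semi-stability of $\pi$ says nothing about the update closures sitting in the queues, and \cref{lem:simevalbody} cannot be ``invoked to transfer'' the last evaluation onto the replay: the view you replay from ($\pi'$, with its populated effect queue and its final decision set) is not similar, in the paper's $\approx$ sense, to the view the last evaluation started from, whose effect queue is emptied by \Rule{EvalOnce}/\Rule{EvalMult} and whose normalization still carries $\dec{Check}$ whenever the pending updates alter the stored value. The fact you actually need is the one the paper's proof uses here: since $\dec{Check} \notin \pi'.\field{dec}$ and $\dec{Check}$ is never removed within a single body evaluation, \Rule{AppSetComp} occurs nowhere in the last evaluation's derivation---in particular not inside the applications of queued update closures performed by \Rule{SttReBind}---and the top-level-only restriction on Hooks excludes \Rule{Eff} and further Hook rules from those closures. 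Hence each update-closure application is pure, the last evaluation's \Rule{SttReBind} steps touch the view only by storing $v_n$ and flushing the queue, and the replay's empty-queue \Rule{SttReBind} steps therefore produce identical bindings and identical views, after which your determinism argument closes the proof. Note also that your justification ``no \Rule{AppSetComp} fires in the replay, since otherwise $\dec{Check}$ would reappear, contradicting the absence of a setter call in the last step'' is circular as stated: you must first establish that the replay mirrors the last evaluation's derivation before the absence of setter calls in the latter tells you anything about the former.
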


\begin{proof}
  It is sufficient to show that the last evaluation of the component body produces a stable view.
  Since~$\dec{Check} \notin \pi'.\field{dec}$, \Rule{AppSetComp} is not included in the derivation and all state updates in~$\pi$ are pure.
  This means~$\sem{normalize}(\pi)$ has empty queues.
  Since~$\pi \approx \sem{normalize}(\pi)$, by the \cref{lem:stabnorm}, $\pi' = \sem{normalize}(\pi)$,
  which is stable.
\end{proof}

\begin{lem}[Preservation of Semi-Stability]\label{lem:semistab}
  If $\<e,\delta\> \smallstep^* \<t,m,\omega,\delta,\mu\>$, then~$m$ is $t$-semi-stable.
\end{lem}

\begin{proof}
  It is sufficient to show that the initial step generates a semi-stable tree memory, and derivation of each step preserves the semi-stability.
  \begin{description}
    \item[{\normalfont\Rule{StepInit}, \Rule{StepCheck}}]
      By \cref{lem:stab}, evaluation of each component body produces stable view.
      Updating views' children or adding decisions does not break the stability, so views modified by $\sem{init}$ and $\sem{visit}$ are always stable.
    \item[{\normalfont\Rule{StepEffect}, \Rule{StepEvent}}]
      The only rule modifying the view during normal evaluation is \Rule{AppSetNormal}.
      It only adds state updates or modify the decision of a view, which preserves semi-stability of the view. \qedhere
  \end{description}
\end{proof}

\begin{remark}
  While the semi-stability of memory is always preserved, the stability is often violated.
  When setter closures are applied during the execution of Effects or event handlers,
  the update functions are queued,
  which are flushed and applied during the next check.
  Note that the memory is indeed always stable after the check.
\end{remark}

\begin{restatable}[Preservation of Coherence]{lem}{coherence}\label{lem:coherence}
  If $\<e,\delta\> \smallstep^* \<t,m,\omega,\delta,\mu\>$, then~$m$ is $t$-coherent.
  That is, $\sem{init}$ and $\sem{visit}$ always produce~$t$-coherent tree memory.
\end{restatable}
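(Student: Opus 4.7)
The plan is to proceed by induction on the length of the transition sequence $\<e,\delta\> \smallstep^* \<t,m,\delta,\mu\>$. The base case is vacuous: before any step is taken there is no memory to check. For the inductive step, I would verify that each of the four transition rules (\Rule{StepInit}, \Rule{StepEffect}, \Rule{StepCheck}, \Rule{StepEvent}) preserves or establishes $t$-coherence of the resulting tree memory, appealing to the companion Preservation of Validity whenever I need to reason about which state labels a body traversal actually touches.

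The central observation driving all four cases is an atomicity invariant on the pair (decision, state update queue). The \dec{Check} decision is added to a view only by \Rule{AppSetComp} and \Rule{AppSetNormal}, and both rules simultaneously append the update closure to the corresponding $\rho[\ell].\field{sttq}$. Dually, the only rule that drains a state update queue is \Rule{SttReBind}, which fires during a body re-evaluation in \phase{Succ} phase; this re-evaluation itself is triggered from \Rule{EvalOnce}/\Rule{EvalMult} only after \dec{Check} has been stripped from the view's decision by the premise $\pi.\field{dec}\setminus\{\dec{Check}\}$. Hence every local modification to a view either grows both sides of the coherence biconditional together or shrinks both together.

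For \Rule{StepInit}, the interesting sub-cases inside \sem{check} (\Rule{CheckNoEffect} and \Rule{CheckEffect}), and the symmetric cases inside \sem{reconcile}, the view is modified by the retrying body evaluation $\Step*$. By the premise of \Rule{EvalOnce} the final view $\pi'$ has $\dec{Check} \notin \pi'.\field{dec}$, and the atomicity invariant then forces every state queue in $\pi'$ to be empty: any \Rule{AppSetComp} call that happened in the last iteration would have re-introduced \dec{Check} and disqualified that iteration from terminating via \Rule{EvalOnce}, so no queue could have been left populated. Coherence of $\pi'$ follows. For \Rule{StepEffect} and \Rule{StepEvent}, computation runs in \phase{Normal} phase and the only rule that touches a memory view is \Rule{AppSetNormal}, which by atomicity preserves coherence directly.

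The main obstacle is the retrying-evaluation case just sketched: one has to argue that in the final iteration of the body the \Rule{SttReBind} pass really does reach every label whose queue was populated either at the start of the iteration or by an intervening \Rule{AppSetComp}. This is where the validity of the view under $\delta$ is indispensable. Validity ensures that the domain of $\pi.\field{sttst}$ is exactly the set of \textsf{useState} labels occurring in the component body, and the syntactic Rules-of-Hooks restriction (hooks only at the top level of a component body) guarantees each such label is visited exactly once per body run. Consequently no populated queue can escape the final flush, and combined with the atomicity invariant this yields $t$-coherence at the end of every transition, completing the induction.
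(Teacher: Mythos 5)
Your proposal is correct and follows essentially the same route as the paper's proof: a case analysis over the four step rules, with the two key observations that a terminating retrying body evaluation leaves a view with no \dec{Check} and empty state queues, and that \Rule{AppSetNormal} adds the \dec{Check} decision and the queued update together. Your write-up is in fact more explicit than the paper's at the one delicate point---why every populated queue is necessarily flushed in the final iteration---which you correctly pin on validity and the top-level-Hooks restriction, a detail the paper's proof leaves implicit.
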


\begin{proof}
  It is sufficient to show that the initial step generates a coherent tree memory, and the derivation of each step preserves the coherence.
  \begin{description}
    \item[{\normalfont\Rule{StepInit}}]
      A retrying evaluation of the component body produces a view whose decision does not include $\dec{Check}$ and state queues are empty, thus coherent.
      The result of $\sem{init}$ only contains new views.
      Therefore, the resulting memory $m$ is $t$-coherent.
    \item[{\normalfont\Rule{StepCheck}}]
      As stated above, the views whose bodies are evaluated are coherent.
      Views that are not evaluated are left as is in $m'$, thus preserving coherence.
      Therefore, each coherence of the views in the resulting memory $m'$ is preserved.
    \item[{\normalfont\Rule{StepEffect}, \Rule{StepEvent}}]
      The only rule modifying the view during normal evaluation is \Rule{AppSetNormal}. It adds a state update and adds $\dec{Check}$ decision to the view, which preserves the coherence of the view. \qedhere
  \end{description}
\end{proof}

\begin{lem}[Similar Reconciliations]\label{lem:reconsim}
  Reconciling with similar memories produces the equivalent results.
  That is, if $m \vdash \sem{reconcile}(t,s)=\<t',m', \omega\>$,
  then for $\hat m$ such that $m \approx_t \hat m$,
  $\hat m \vdash \sem{reconcile}(t,s) = \<t',\hat m', \omega\>$ and
  $m' \equiv_{t'} \hat m'$.
\end{lem}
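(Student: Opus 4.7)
The plan is to proceed by structural induction on the derivation of $m \vdash \sem{reconcile}(t,s) = \<t',m'\>$, performing case analysis on which reconciliation rule fires. The key preliminary observation is that since $\sem{normalize}$ modifies only the state store and the decision set of a view, two similar views $\pi \approx \hat\pi$ must agree exactly on their $\field{spec}$ and $\field{child}$ fields. In particular, for any path $p$ reachable from $t$, $m[p]$ and $\hat m[p]$ have the same component name and the same child tree; this ensures that the \emph{same} reconciliation rule is dispatched on both $m$ and $\hat m$.

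For \Rule{ReconcileComEffect}, after dispatching the shared rule, I appeal directly to \cref{lem:simevalbody} on the re-evaluation of the component body: since $m[p] \approx \hat m[p]$ and both are valid under $\delta$ (by preservation of validity, which can be lifted to a memory-wide invariant), the multi-step evaluation in \phase{Succ} produces the identical view $\pi'$ and identical view spec $s$ in both derivations. Because $\pi.\field{child} = \hat\pi.\field{child}$, the recursive call to $\sem{reconcile}$ on the child is invoked with the same tree and view spec and with memories that remain similar (the evaluation overwrites only $m[p]$ and $\hat m[p]$ with the same $\pi'$, preserving $\approx_{\pi.\field{child}}$ on the rest). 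The induction hypothesis then gives the same $t'$ and $m' \equiv_{t'} \hat m'$.

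For \Rule{ReconcileComNew} and \Rule{ReconcileOther}, both derivations delegate to $\sem{init}$, so I need the companion sub-lemma that $\sem{init}$ on similar memories with the same view spec produces the same tree and equivalent resulting memories. This is straightforward since $m \approx_t \hat m$ forces $\domain m = \domain{\hat m}$ (equality holds off the reachable set, similar views share keys on the reachable set), so fresh-path selection coincides; the \phase{Init} evaluation starts from a placeholder view that does not depend on the existing memory entries; and recursive initialization of the child view spec proceeds under matched fresh paths.

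For \Rule{ReconcileArray}, I fold the induction across the sequence of children while maintaining the invariant $m_i \approx_{[\Overline{t_j}]_{j=i+1}^n} \hat m_i$ and $[\Overline{t'_j}]_{j=1}^i$-equivalence on the already-reconciled portion. Each reconciliation of $t_i$ only modifies paths reachable from $t_i$ (either by rewriting $m[p]$ to a normalized post-evaluation view, or by allocating fresh paths), so paths reachable solely from later siblings remain equal between $m_i$ and $\hat m_i$, and paths reachable from $t_i$ alone have been rewritten identically on both sides by the IH. The main obstacle I anticipate is precisely this bookkeeping in the array case: pinning down that ``reachable from $t_i$ but not from later $t_j$'' versus ``reachable from later $t_j$'' gives a clean partition, so that the similarity relation composes through the loop. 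Once this is stated as an auxiliary frame lemma for $\sem{reconcile}$ (it modifies only paths reachable from its tree argument, replacing each modified entry by its fully normalized form), the array case reduces to a routine induction and the main theorem follows.
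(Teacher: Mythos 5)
Your proposal is correct and follows essentially the same route as the paper's proof: induction on the reconciliation derivation, with \cref{lem:simevalbody} discharging the component-re-evaluation case, the observation that $\sem{init}$ neither reads nor modifies existing views handling \Rule{ReconcileComNew}/\Rule{ReconcileOther}, and a fold of the induction hypothesis across the array case. The extra scaffolding you add—the remark that similarity preserves \field{spec} and \field{child} so the same rule dispatches on both memories, and the explicit frame property that $\sem{reconcile}$ only touches paths reachable from its tree argument—is left implicit in the paper but is exactly the bookkeeping its terse argument relies on.
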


\begin{proof}
  We proceed with induction on the derivation of $m \vdash \sem{reconcile}(t,s)=\<t',m', \omega\>$.
  \begin{description}
    \item[{\normalfont\Rule{ReconcileComNew}, \Rule{ReconcileOther}}]
      \sem{init}(s) does not read or modify existing views in $m$; it only adds new views to $m$.
      All new views are descendants of $t'$, and are the same in $m'$ and $\hat m'$.
      Since the evaluation contexts are identical, the output buffers are also the same.
    \item[{\normalfont\Rule{ReconcileList}}]
      Each derivation of $\sem{reconcile}(t_i,s_i)$ produces the same tree, the same output buffer, and $t'_i$-equivalent memories.
      Therefore, the final trees and the output buffers are the same and the final memories are $[\Overline{t'_i}]_{i=1}^n$-equivalent.
    \item[{\normalfont\Rule{ReconcileComEffect}}]
      The derivation of $\Bigstep*{\phi}{p}$ produces the same $v$, $\pi'$, and $\omega$ by \cref{lem:simevalbody} and $\sem{reconcile}(t,s)$ produces the same tree and output buffer with $t'$-equivalent memories by IH.
      Therefore, the final memories are also $t'$-equivalent. \qedhere
  \end{description}
\end{proof}

\begin{lem}[Similar Checks]\label{lem:simvisit}
  Checking with similar memories produces the equivalent results.
  That is, for similar memories $m$ and $\hat m$ that are $t$-coherent and $t$-semi-stable and have no view with \dec{Effect} decision set,
  if $m, \delta \vdash \sem{check}(t)=\<\mu,m', \omega\>$,
  then $\hat m, \delta \vdash \sem{check}(t) = \<\mu,\hat m', \omega'\>$ and
  $m' \equiv_{t'} \hat m'$.
  We also have $\omega = \omega'$ if component bodies do not print.
\end{lem}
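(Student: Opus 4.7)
The plan is to proceed by structural induction on the derivation of $m \vdash \sem{check}(t) = \<b, m'\>$, mirroring the case analysis of the \sem{check} rules. The leaf cases \Rule{CheckConst} and \Rule{CheckClos} are immediate, since neither modifies the memory and the mode returned is always $\eloopmode$. The array case \Rule{CheckArray} follows by chaining the inductive hypothesis along the children, using that $t$-similarity is preserved when threading the intermediate memories through each recursive call and that $\sqcup$ is deterministic.

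The path cases \Rule{CheckIdle}, \Rule{CheckNoEffect}, and \Rule{CheckEffect} require more care. Let $\pi = m[p]$ and $\hat \pi = \hat m[p]$; these are similar by the $t$-similarity of $m$ and $\hat m$. The subtle point is that $\pi$ and $\hat \pi$ need not take the same \sem{check} rule: since $\approx$ is defined modulo pure-prefix normalization, one view may have $\dec{Check}$ in its decision (carrying leftover pure updates) while the other does not. I would first establish the following dichotomy. If the normalized queues of $\pi$ (equivalently, of $\hat \pi$) are non-empty, then both remaining queues begin with an impure closure, so by coherence both views carry $\dec{Check}$, and both take the re-evaluation branch \Rule{CheckNoEffect} or \Rule{CheckEffect}. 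Otherwise, the normalized queues are empty, and any difference in the original queues consists solely of pure updates that preserve the state value; the last point uses the no-\dec{Effect} assumption together with the definition of $\sem{normalize}$: a pure prefix that changed the state would inject \dec{Effect} into the normalized decision, contradicting similarity with a view that has no \dec{Effect}.

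In the symmetric re-evaluation case, \cref{lem:simevalbody} guarantees that the two body evaluations from $\pi$ and $\hat \pi$ produce exactly the same resulting view $\pi'$. Hence the same rule (\Rule{CheckNoEffect} or \Rule{CheckEffect}) fires on both sides. For \Rule{CheckNoEffect} I would apply the induction hypothesis to the child, noting that the preconditions (coherence, semi-stability, no \dec{Effect}) transfer to the child since evaluation of the body does not touch the $\field{child}$ field. For \Rule{CheckEffect}, \cref{lem:reconsim} delivers equivalent reconciled children, and the final memory agreement follows by updating both memories with the same $\pi'$ at $p$. In the asymmetric case, where (say) $\pi$ has no $\dec{Check}$ and takes \Rule{CheckIdle} while $\hat \pi$ has $\dec{Check}$ from a pure prefix and takes \Rule{CheckNoEffect}, \cref{lem:stabnorm} (semi-stability and normalized form) gives that the re-evaluation of $\hat \pi$ produces exactly $\sem{normalize}(\hat \pi)$, which is similar to $\pi$. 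The children are structurally unchanged by the body evaluation, so the recursive call on the child is handled by the induction hypothesis, and both sides return the same mode.

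The main obstacle is the asymmetric case: keeping track of the fact that the two derivations have different shapes (\Rule{CheckIdle} against \Rule{CheckNoEffect}) while still ending up with similar resulting memories and identical returned modes. The reasoning rests crucially on the three standing assumptions, coherence, semi-stability, and the absence of $\dec{Effect}$: without coherence the dichotomy between "$\dec{Check}$ present" and "queues empty" collapses; without semi-stability \cref{lem:stabnorm} does not apply; and without the no-$\dec{Effect}$ hypothesis a hidden pure prefix could have silently changed the state in one memory, breaking the equivalence of the results. Once this case is handled, the remaining bookkeeping is routine propagation through the recursive calls.
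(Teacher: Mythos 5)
Your proposal is correct and follows essentially the same route as the paper's proof: induction on the derivation of $\sem{check}$, with \cref{lem:simevalbody} handling the symmetric re-evaluation cases, \cref{lem:reconsim} handling \Rule{CheckEffect}, and \cref{lem:stabnorm} together with coherence resolving the asymmetric case where similar views take different rules (\Rule{CheckIdle} versus \Rule{CheckNoEffect}). Your explicit dichotomy on the normalized queues is a slightly more detailed packaging of the same argument the paper makes when it derives $\dec{Check} \in \hat\pi.\field{dec}$ from $\dec{Check} \in \sem{normalize}(\pi).\field{dec}$, so no substantive difference remains.
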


\begin{proof}
  We proceed with induction on the derivation of $m, \delta \vdash \sem{check}(t)=\<\mu,m', \omega\>$.
  \begin{description}
    \item[{\normalfont\Rule{CheckConst}, \Rule{CheckClos}}] These cases are trivial.
    \item[{\normalfont\Rule{CheckList}}] ($t = [\Overline{t_i}]_{i=1}^n$).
      Each derivation of $\sem{check}(\hat t_i)$ produces memory, $t'_k$-equivalent to~$m_i$ for~$1 \le k \le i$ and $t'_k$-similar to $m_i$ for $i < k \le n$ by the IHs.
      As a result, $m_n$ and $\hat m_n$ are $t'_k$-equivalent for~$1 \le k \le n$, therefore $[\Overline{t'_i}]_{i=1}^n$-equivalent.
    \item[{\normalfont\Rule{CheckIdle}}] ($t = p$, $m[p] = \pi$, and $\hat m[p] = \hat \pi$).
      Note that $\pi$ is coherent and $\dec{Check} \notin \pi.\field{dec}$, implying that the state update queues are empty.
      Hence, $\sem{normalize}(\pi) = \pi$.
      \begin{itemize}
        \item $\dec{Check} \notin \hat \pi.\field{dec}$:
          By the IH, $\sem{check}(t)$ produces $t$-equivalent results.
        \item $\dec{Check} \in \hat \pi.\field{dec}$:
          By \cref{lem:stabnorm}, $\hat \pi' = \sem{normalize}(\hat \pi)$.
          Since $\pi \approx \hat \pi$, $\sem{normalize}(\hat \pi) = \sem{normalize}(\pi)$, we have~$\hat\pi' = \pi$.
          Since no view---including~$\pi$---has \dec{Effect} in~$m$, $\dec{Effect} \notin \hat \pi'.\field{dec}$ as well.
          Therefore, $\hat m, \delta \vdash \sem{check}(p) = \_$ is derived from \Rule{CheckNoEffect},
          and thus $\hat m, \delta \vdash \sem{check}(t) = \<b, \hat m', \omega\>$.
          By the IH, $m' \equiv_t \hat m'$.
          As a result, the results are $p$-equivalent.
          Note that we have $\omega = \omega'$ if the component prints nothing.
      \end{itemize}
    \item[{\normalfont\Rule{CheckNoEffect}}] ($t = p$, $m[p] = \pi$, and $\hat m[p] = \hat \pi$).
      \begin{itemize}
        \item $\dec{Check} \notin \hat \pi.\field{dec}$:
          The proof is similar to the case of \Rule{CheckIdle} where $\dec{Check} \in \hat m[p].\field{dec}$, except that $m$ and $\hat m$ are swapped.
        \item $\dec{Check} \in \hat \pi.\field{dec}$:
          By \cref{lem:simevalbody}, $\hat \pi' = \pi'$ (and $\omega = \omega'$ if componets print nothing),
          and by the IH, $m' \equiv_p \hat m'$.
          Therefore, the results are $p$-equivalent.
      \end{itemize}
    \item[{\normalfont\Rule{CheckEffect}}] ($t = p$).
      Since $\sem{normalize}(\hat \pi) = \sem{normalize}(\pi)$ and $\dec{Check} \in \sem{normalize}(\pi).\field{dec}$, $\dec{Check} \in \hat \pi.\field{dec}$.
      By \cref{lem:simevalbody}, $\pi' = \hat \pi'$,
      and by \cref{lem:reconsim}, $m' \equiv_p \hat m'$.
      Therefore, the results are $p$-equivalent. \qedhere
  \end{description}
\end{proof}

\simtransthm*

\begin{proof}
  Since the views not reachable from~$t$ are the same in~$m$ and~$\hat m$ before the transition,
  and these views remain unchanged, it suffices to show that the memories~$m'$ and~$\hat m'$ resulting from the transition from~$m$ and~$\hat m$, respectively, are $t$-equivalent.
  The transition from check mode $\checkmode$ is derived by the \sem{check} rules,
  so we can apply \cref{lem:simvisit}
  if we can show that the views in $\checkmode$ have no \dec{Effect} decision.
  Note that~$m$'s $t$-semi-stability and $t$-coherence have already been shown in \cref{lem:semistab,lem:coherence}.
  
  It also follows from \cref{lem:simvisit} that we have $\omega' = \omega''$ if component bodies do not print.

  We show that only views reachable from the root can contain an \dec{Effect} decision, and such views never contain an \dec{Effect} in check mode $\checkmode$ or event loop mode $\eloopmode$.
  \begin{description}
    \item[{\normalfont\Rule{StepInit}}] Every view added by \sem{init} is reachable from the root.
    \item[{\normalfont\Rule{StepEffect}}] The \sem{commitEffs} rules remove all \dec{Effect} decisions from reachable views.
    \item[{\normalfont\Rule{StepCheck}}] Only paths that are passed as arguments to \sem{check} or returned by \sem{reconcile} may have their views contain an \dec{Effect} decision,
    and those views are all reachable after \sem{check} or \sem{reconcile}.
    If any view receives an \dec{Effect} decision, the \sem{check} rule produces the rendered mode $\rendermode$, to which the state transitions.
    \item[{\normalfont\Rule{StepEvent}}] The \field{children} and \field{decision} fields are not modified during the \phase{Normal} phase. \qedhere
  \end{description}
\end{proof}

\clearpage
\section{Comparison of Reactive UI Frameworks}\label{app:sec:frameworks}
The categorization of reactive GUI web frameworks based on three properties---(a)~whether they re-read component specifications for re-rendering, (b)~how state updates are processed (queued or immediate), and (c)~which reactivity primitives they employ---are given in \cref{tab:frameworks}.

\begin{table}[ht]
  \centering
  \caption{Comparison of reactive UI frameworks.}\label{tab:frameworks}
  \smaller[2]
  \begin{tabular}{llll}
    \toprule
    Framework & Re-read Spec. & State Updates & Reactivity Primitives \\
    \midrule
    React & Yes & Queued & Hooks \\
    Preact & Yes & Queued & Hooks \\
    Dioxus & Yes & Immediate & Signals \\
    Solid & No & Immediate & Signals \\
    Leptos & No & Immediate & Signals \\
    Angular & No & Immediate & Signals \\
    Vue & No & Immediate & Signals \\
    Svelte w/ runes & No & Immediate & Signals \\
    Svelte w/o runes & No & Immediate & Compiler-assisted \\
    SwiftUI & Yes & Immediate & Compiler-assisted \\
    \bottomrule
  \end{tabular}
\end{table}

\DefineShortVerb{\+}
We compare various reactive UI frameworks to
\begin{enumerate}
  \item check if state updates are queued or immediate, and
  \item check if component logic gets re-evaluated every render.
\end{enumerate}
\subsection{React}
+counter+ gets printed every render.
+count+ updates are queued.
\begin{reactcode}
import { useState, useEffect } from 'react';
function Counter() {
  console.log('counter');
  const [count, setCount] = useState(0);
  useEffect(() => {
    if (count >= 3) {
      setCount(0);
    }
  }, [count]);
  function h() {
    console.log(count);
    setCount(count + 1);
    console.log(count);
  }
  return (
    <button onClick={h}>
      {count}
    </button>
  );
}
\end{reactcode}

\subsection{Preact}
+counter+ gets printed every render.
+count+ updates are queued.
\begin{reactcode}
import { useState, useEffect } from 'preact/hooks';
function Counter() {
  console.log('counter');
  const [count, setCount] = useState(0);
  useEffect(() => {
    if (count >= 3) {
      setCount(0);
    }
  }, [count]);
  function h() {
    console.log(count);
    setCount(count + 1);
    console.log(count);
  }
  return (
    <button onClick={h}>
      {count}
    </button>
  );
}
\end{reactcode}

\subsection{Dioxus}
+counter+ gets printed every render.
+count+ updates are immediate.
\begin{reactcode}
use dioxus::prelude::*;
fn log(msg: &str) { ... }
#[component]
fn Counter() -> Element {
  log("Counter");
  let mut count = use_signal(|| 0);
  use_effect(move || {
    if count() >= 3 {
      count.set(0);
    }
  });
  rsx! {
    button {
      onclick: move |_| {
        log(count.to_string().as_str());
        count += 1;
        log(count.to_string().as_str());
      },
      "{count}"
    }
  }
}
\end{reactcode}

\subsection{Solid}
+counter+ gets printed only once.
+count+ updates are immediate.
\begin{reactcode}
import { createSignal, createEffect } from 'solid-js';
function Counter() {
  console.log('counter');
  const [count, setCount] = createSignal(0);
  createEffect(() => {
    if (count() >= 3) {
      setCount(0);
    }
  });
  function h() {
    console.log(count());
    setCount(count() + 1);
    console.log(count());
  }
  return (
    <button onClick={h}>
      {count()}
    </button>
  );
}
\end{reactcode}

\subsection{Leptos}
+counter+ gets printed only once.
+count+ updates are immediate.
\begin{reactcode}
use leptos::leptos_dom::logging::*;
use leptos::prelude::*;
#[component]
pub fn Counter() -> impl IntoView {
  console_log("Counter");
  let (value, set_value) = signal(0);
  Effect::new(move |_| {
    if value.get() >= 3 {
      set_value.set(0);
    }
  });
  view! {
    <button on:click=move |_| {
      console_log(&value.get().to_string());
      set_value.update(|value| *value += 1);
      console_log(&value.get().to_string());
    }>{value}</button>
  }
}
\end{reactcode}

\subsection{Angular}
+counter+ gets printed only once.
+count+ updates are immediate.
\begin{reactcode}
import {Component, signal, effect} from '@angular/core';
@Component({
  selector: 'app-root',
  template: `
    <button (click)="h()">
      {{ count() }}
    </button>
  `,
})
export class CounterComponent {
  count = signal(0);
  constructor() {
    console.log('counter');
    effect(() => {
      if (this.count() >= 3) {
        this.count.set(0);
      }
    });
  }
  h() {
    console.log(this.count());
    this.count.update(val => val + 1);
    console.log(this.count());
  }
}
\end{reactcode}

\subsection{Vue}
+counter+ gets printed only once.
+count+ updates are immediate.
\begin{reactcode}
<script setup>
import { ref, watch } from 'vue';
console.log('counter');
const count = ref(0);
watch(count, (newValue) => {
  if (newValue >= 3) {
    count.value = 0;
  }
});
function h() {
  console.log(count.value);
  count.value += 1;
  console.log(count.value);
}
</script>
<template>
  <button @click="h">
    {{ count }}
  </button>
</template>
\end{reactcode}

\subsection{Svelte}
\subsubsection{Svelte with Runes}
+counter+ gets printed only once.
+count+ updates are immediate.
\begin{reactcode}
<script>
  console.log('counter');
  let count = $state(0);
  $effect(() => {
    if (count >= 3) {
      count = 0;
    }
  });
  function h() {
    console.log(count);
    count += 1;
    console.log(count);
  }
</script>
<button onclick={h}>
  {count}
</button>
\end{reactcode}

\subsubsection{Svelte without Runes}
+counter+ gets printed only once.
+count+ updates are immediate.
\begin{reactcode}
<svelte:options runes={false} />
<script>
  console.log('counter');
  let count = 0;
  $: if (count >= 3) {
    count = 0;
  }
  function h() {
    console.log(count);
    count += 1;
    console.log(count);
  }
</script>
<button on:click={h}>
  {count}
</button>
\end{reactcode}

\subsection{SwiftUI}
+counter+ gets printed every render.
+count+ updates are immediate.
\begin{reactcode}
struct Counter: View {
  @State private var count = 0
  var body: some View {
    print("counter")
    return Button("\(count)") {
      print(count)
      count += 1
      print(count)
    }.onChange(of: count) { oldValue, newValue in
      if newValue >= 3 {
        count = 0
      }
    }
  }
}
\end{reactcode}

\UndefineShortVerb{\+}

\end{document}